\theoremstyle{definition} 
\theoremstyle{plain} 
\numberwithin{equation}{section}
\newtheorem{mydef}{Definition}[section]
\newtheorem{mytheo}{Theorem}[section]
\newtheorem{corollary}{Corollary}[section]
\newtheorem{myprop}{Proposition}[section]
\DeclareMathOperator{\tr}{tr}
\title{\normalfont\spacedallcaps{SPH Consistent With explicit LES}\\ 
	} 
\author{\spacedlowsmallcaps{Kalale Chola\textsuperscript{1} }} 
\date{\today} 
\begin{document}


\renewcommand{\sectionmark}[1]{\markright{\spacedlowsmallcaps{#1}}} 
\lehead{\mbox{\llap{\small\thepage\kern1em\color{halfgray} \vline}\color{halfgray}\hspace{0.5em}\rightmark\hfil}} 

\pagestyle{scrheadings} 


\maketitle 

\setcounter{tocdepth}{2} 

%
%


\section*{Abstract} 
\paragraph*{}The aim of this paper is to introduce a consistent velocity smoothing method for smoothed particle hydrodynamics (SPH). First the locally averaged Navier-Stokes equations are derived in a mathematically rigorous way to demonstrate the "missing" turbulent stress in standard SPH formulations. It is then shown that with a proper choice of velocity smoothing, SPH and large eddy simulation (LES) equivalently solve the same set of filtered equations that require closure approximations. The only difference between SPH and LES, as demonstrated in this paper is the representation of governing equations; for the former the equations are in integro-differential form whereas for the latter they are in differential form. One direct consequence of this equivalence between SPH and LES is that turbulence modeling techniques originally developed for LES can easily be adopted into this version of SPH. 

Our representation of the sub-grid stress tensor in integral form will provide insight into alternative approaches for dealing with turbulence modeling. Although the use of the Smagorinsky model is the common practice for turbulence modeling in LES, it will be shown that for SPH the most natural choice is to use approximate deconvolution methods. The other particularly fundamental consequence of our choice of velocity smoothing is that the resulting filtered equations are nonconservative in nature and a correct Lagrangian cannot be easily constructed.  





{\let\thefootnote\relax\footnotetext{\textsuperscript{1} \textit{Shintake Unit, OIST Graduate University, Okinawa, Japan}}}


\newpage 


\section{Introduction}
Smoothed particle hydrodynamics (SPH), pioneered by J.J. Monaghan, has become a popular particle method across many fields of science such as astrophysics, continuum fluid and solid mechanics, computer graphics e.t.c. The core elements of SPH can be found in Monaghan~\cite{Monaghan2005}, Price \cite{Price2012}. In this paper, the locally averaged continuum equations are first derived rigorously with emphasis on the consistent way of smoothing the velocity field and how to move the SPH particle.  

As observed by Monaghan \cite{Monaghan2002,Monaghan2009,Monaghan2011}, for high speed flows there are regions where SPH particles could stream through each other. In a continuum simulation this kind of behavior is unphysical as it implies multi-valued local velocity fields. A standard remedy for this is to artificially increase the viscosity which may lead to over-dissipation. The version of SPH to be presented in this paper 
is based on the principle that each fluid particle moves with a smoothed velocity so that counter-streaming is prevented. Furthermore, the filtering provides another advantage that local fluctuations are filtered out hence the local disorder is reduced. As it will be shown later on, the model does not conserve energy exactly. Another important feature is that the filtered conservation laws are consistent with those of explicit LES.

A feature of the proposed model is that the presence of the sub-particle stress (SPS) tensor should account for the effect of small scale motion on the smoothed flow fields. The SPH version of the SPS tensor is presented in integral form and when transformed into differential form it reduces to the sub-grid stress (SGS) tensor of LES. 

The discussion in this paper will proceed as follows. A filtering integral transform (FIT) is proposed. By applying the FIT to the continuum compressible Navier-Stokes equations (CNSEs) a set of smoothed or filtered equations consistent with those of explicit LES is derived. Finally, we discuss the procedure for turbulence modeling in a particle system by studying conditions under which the local fluctuations can be neglected. Numerical aspects of model will be presented elsewhere.

\section{Theoretical Development}
First, we consider the Lagrangian form of compressible Navier-Stokes equations (CNSEs) for a continuum.   
\begin{align}
\frac{d\rho}{dt} &= -\rho\nabla\cdot\mathbf{u} \label{eq2016:1}\\
\rho\frac{d\mathbf{u}}{dt}&= -\nabla p + \nabla\cdot\underline{\underline{\sigma}} + \rho\mathbf{b} \label{eq2016:2}\\ 
\kappa_{s}\frac{dp}{dt} &= -\nabla\cdot\mathbf{u} + \gamma\alpha\nabla\cdot(\kappa_{s}\nabla p) -\alpha\nabla\cdot\left(\frac{1}{\rho}\nabla\rho\right)+\frac{\beta}{\rho c_{p}}\Phi\label{eq2016:2aa}
\end{align} 
	  
where $\mathbf{u}$ is the fluid velocity, $\rho$ is fluid mass-density, $p$ is the pressure, $\underline{\underline{\sigma}}$ is viscous stress tensor, $\underline{\underline{S}}$ is the strain rate tensor,  $\Phi =\underline{\underline{\sigma}}:\underline{\underline{S}}$ is the viscous dissipation, $\alpha$ is the thermal diffusivity, $\beta$ is the volumetric thermal expansivity, $c_{p}$ is the specific heat capacity at constant pressure, $c_{v}$ is the specific heat capacity at constant volume, $\gamma=c_p/c_v=K_{S}/K_{T}$ is the adiabatic index. Here, $K_{S}$ is the adiabatic incompressibility modulus and $K_{T}$ is the isothermal incompressible modulus. For water and air, $\gamma$ is 7.0 and 1.4 respectively. We further assume that the adiabatic incompressibility modulus varies linearly with pressure so that 
\begin{align}
K_{S}= K_{S,0} + p\equiv\frac{1}{\kappa_{s}}
\end{align}
For an ideal gas, it then follows that $K_{S}$ is given by,
\begin{align}
K_{S}= \gamma K_{T}=-\gamma v\left(\frac{\partial p}{\partial v}\right)_{T}=\gamma p
\end{align}
meaning that for an ideal gas, $K_{S,0}=0$. For gases and liquids in general, we have that the incompressibility modulus under standard conditions $K_{S,0}$ is related to the standard speed of sound $c_0$ as $K_{S,0} = \rho_{0}c_{0}^{2}$.

For a Newtonian fluid, the Cauchy stress tensor $\underline{\underline{\tau}}$ can be expressed in terms of a deviatoric stress and a normal stress as
\begin{eqnarray}
\underline{\underline{\tau}}&= -p\underline{\underline{1}}+ \underline{\underline{\sigma}}=-p\underline{\underline{1}}+\nu\rho (\nabla\mathbf{u}+\nabla\mathbf{u}^{\textup{T}}-\frac{2}{d}\nabla\cdot\mathbf{u}\underline{\underline{1}})\label{eqn:201619}
\end{eqnarray}
where $\nu$ is the kinematic viscosity and $d$ is the space dimension. Note that the Stokes hypothesis for a zero bulk viscosity has been assumed for simplicity.

These PDEs are valid over a continuum where the fluid variables are assumed to be smooth and continuous.

\subsection{Filtering problem}  
Given the continuum or disordered field\{$\rho(\mathbf{r})$, $p(\mathbf{r})$, $\mathbf{u}(\mathbf{r})$\} defined on a domain $\Omega$, compute local average fields \{$\langle\rho_{h}(\mathbf{r})\rangle$, $\langle p_{h}(\mathbf{r})\rangle$, $\widetilde{\mathbf{u}}_{h}(\mathbf{r})$\} that faithfully represent the behavior of the disordered field on scales above some, user defined, filter length (here denoted $h$) and which truncates scales smaller than $\mathcal{O}(h)$.

The main goal is to transform the governing equations (\ref{eq2016:1}), (\ref{eq2016:2}) and (\ref{eq2016:2aa}) into integro-differential equations. After this transformation, we can then invoke a particle discretization by replacing integrals with summations; we then obtain our version of SPH which are a set of ordinary differential equations in time. On the other hand, if the integro-differential are now transformed into partial differential equations, we obtain the governing equations for LES thereby establishing the equivalence between the SPH and LES approaches.
  
\subsection{Local Averaging}
In order to derive the conservation laws for a discrete fluid, we need locally averaged variables. Using ideas of distribution theory, physical attributes of the material elements of the discrete fluid such as mass density, linear momentum density, and velocity are replaced by local mean variables obtained by averaging the point variables over small local regions (test spaces) containing many material elements but are still small compared with the scale of macroscopic variation from point to point within the system, Jackson \cite{Jackson1967}.

In science and engineering, in order to measure the amount of a physical quantity such as temperature at a single point, one needs a probe which can extract data only from that single point of interest in space. Since this is not possible practically, there is no way we can correlate the experimental data with theoretical predictions. Therefore, a true macroscopic quantity is by necessity an average over some spatial region surrounding the continuum point where it is nominally defined, Admal et al.~\cite{Admal2010}. 

Consider a function $T(\mathbf{r})$ as representing a value of the physical variable at a particular point $\mathbf{r}$ in space. Is this a realistic thing to do? What can we measure?

Suppose $T(\mathbf{r})$ represents temperature at a point $\mathbf{r}$ in a room $\Omega$. The temperature can be measured with a thermometer by placing the bulb at the point $\mathbf{r}$. Unlike the point, the bulb has nonzero size, so what the thermometer actually measures is the mean temperature over a small region of space $\Omega(\mathbf{r})\in\Omega\subset\mathbb{R}^{n}$. So really, the thermometer measures
\begin{align}
w &\mapsto\int_{\Omega(\mathbf{r})}T(\mathbf{r})w(\mathbf{r})d^{3}\mathbf{r}
\end{align}  
where $w(\mathbf{r})$ is a test function representing the properties of the probe and its length scale. Physically~$w(\mathbf{r})$ depends on the nature of the thermometer or probe and where you place it. The test function~$w(\mathbf{r})$ will tend to be "concentrated" near the location of the thermometer bulb or probe and nearly zero once you are sufficiently far away from the bulb. To say this is an average requires that  $^\forall\mathbf{r}\in\Omega(\mathbf{r})\subset\Omega$;

\begin{enumerate}[noitemsep]
\item[(i)]semi-positive definiteness
\begin{align}
w(\mathbf{r}) &\ge 0
\end{align}
So that a positive measure e.g mass density will have a local average that is likewise a positive measure.
\item[(ii)] completeness
\begin{align}
\int_{\Omega(\mathbf{r})}w(\mathbf{r})d^{n}\mathbf{r} &=1
\end{align}
This condition ensures that the correct macroscopic or continuum quantity is obtained when the corresponding locally averaged quantity is uniform. 
\end{enumerate}
So it would be more meaningful to discuss things like the value of T about a point, ~$ w\mapsto T(w)$~than things like the value of $T$ at a point, $\mathbf{r}\mapsto T(\mathbf{r})$. 

\begin{mydef}[fluid domain]
Let $\Omega\subseteq\mathbb{R}^{n}$ be an open set bounded by a smooth surface $\partial\Omega$. If $\Omega$ is supposed to be "filled with a fluid" such that the mass-density
\begin{align}
\rho(\mathbf{r}) > 0, \qquad ^\forall \mathbf{r}\in\Omega
\end{align}
then we call $\Omega$ a fluid domain.
\end{mydef}

\begin{mydef}[smoothness and compactness]
A real-valued function is said to be \emph{smooth} if it is \emph{infinitely differentiable}. Let $w$ be a smooth and compact function over the test space. We then write $w\in C^{\infty}_{c}(\Omega(\mathbf{r}))$. 
\end{mydef}

\begin{mydef}[Test Space] 
Let $\Omega\subset\mathbb{R}^{n}$ be a given body. For a target particle located at $\mathbf{r}$, we denote its test space as $\Omega_{h}(\mathbf{r})$ bounded by a test surface $\partial\Omega(\mathbf{r})$. The test space is thus the domain of influence of the target particle.
\begin{align}
\Omega_{h}(\mathbf{r}) &= \left\{ \mathbf{r},\mathbf{r}^{\prime}\in\mathbb{R}^{d}\big\vert~w_{h}(\mathbf{r}-\mathbf{r}^{\prime})\geq 0,~\vert\vert\mathbf{r}-\mathbf{r}^{\prime}\vert\vert\leq \kappa h \ , \kappa\in\mathbb{R}^{+}\right\}
\end{align}
\end{mydef}
\subsection{Invertibility of operators}
A linear operator $\hat{C}_{h}$ has a left inverse $\hat{L}_{h}$ if $\hat{L}_{h}\hat{C}_{h}=\hat{1}$ and a right inverse $\hat{R}_{h}$ if $\hat{C}_{h}\hat{R}_{h}=\hat{1}$. Furthermore, if both exist, we have that
\begin{eqnarray}
\hat{L}_{h}&=&\hat{L}_{h}\hat{1}
=\left(\hat{L}_{h}\hat{C}_{h}\right)\hat{R}_{h}=\hat{1}\hat{R}_{h}=\hat{R}_{h}
\end{eqnarray}
which is very useful for infinite dimensional vector spaces. Therefore, if an operator has both a left and a right inverse, then the left inverse is the same as the right inverse and we say that the operator is invertible.

A left inverse exists if the action of the operator on some input vector does not result in irreparable damage so that whatever remains still contains enough information that some linear operator $\hat{L}_{h}$ can restore our original input vector and give back the identity operator. This condition of irreparable damage i.e. not losing information is asking whether the operator $\hat{C}_{h}$ is injective. There exists a left inverse if and only if $\hat{C}_{h}$ is injective. 

For the right inverse to exist, the situation is dual to that of the left inverse; a right inverse exists if $\hat{C}_{h}$ is surjective. 

In particular, if $\hat{C}_{h}$ is an operator in a finite dimensional space $V$ i.e. $dim V < \infty$ then the following is true.
\begin{align}
\hat{C}_{h}\text{is injective}\Longleftrightarrow\hat{C}_{h}\text{is surjective}\Longleftrightarrow\hat{C}_{h}\text{is invertible}
\end{align}
$\hat{C}_{h}$ is surjective if and only if $\hat{C}_{h}$ is injective because failure to be injective and failure to be surjective are both equivalent to loss of information. This can be seen from the dimension formula for a finite vector space
\begin{align}
\text{dim}V &= \text{dim}\left(\text{null} \hat{C}_{h}\right) + \text{dim}\left(\text{range} \hat{C}_{h}\right)\label{deq:1c}
\end{align}
It is clear that if dim $\left(\text{null} \hat{C}_{h}\right)= 0$, then $dim \left(\text{range}  \hat{C}_{h}\right)$ is the whole vector space. Furthermore, if dim $\left(\text{null} \hat{C}_{h}\right)\neq 0$, then $\text{dim} \left(\text{range} \hat{C}_{h}\right)$ is not the whole vector space.

The equivalence (\ref{deq:1c}) breaks down if the vector space is infinite dimensional. For infinite dimensional vector spaces injectivity and surjectivity are not equivalent since each can fail independently. 
\begin{align}
\hat{C}_{h}\text{is invertible}\Longleftrightarrow\hat{C}_{h}\text{is injective and surjective}
\end{align}
In the discussions that follow, we are going to assume that the convolution operator $\hat{C}_{h}$ is invertible and arbitrary. 

\subsection{Resolution of Identity}
Position states for describing the continuum mechanics of material elements or fluid particles moving in an n-dimensional space $\mathbf{r},\mathbf{r}^{\prime}\in\Omega_{h}(\mathbf{r})\subset\mathbb{R}^{n}$ are defined as follows
\begin{align}
\text{position basis state :}\quad \vert\mathbf{r}\rangle\equiv\vert x,y,z\rangle, \quad ^\forall x,y,z\in\mathbb{R}
\end{align} 
where the label $\mathbf{r}=\mathbf{r}(t)$ in the ket is the position of a material element or its trajectory. Since particle trajectories are continuous and we position states $\vert\mathbf{r}\rangle$ for all $\mathbf{r}$ to form a basis, we are dealing with a non-denumerable or infinite basis. Therefore the ket is a vector in infinite dimensional vector space of states of the theory. The $\vert$ $\rangle$ enclosing the label of the position eigenstates plays a crucial role: it helps us to see that that object lives in an infinite dimensional vector space. Basis states with different values of $\mathbf{r}$ are different vectors in the state space.
The inner product must be defined, so we take
\begin{align}
\langle\mathbf{r}\vert\mathbf{r}^{\prime}\rangle &=\delta(\mathbf{r}-\mathbf{r}^{\prime})\equiv\delta(x-x^{\prime})\delta(y-y^{\prime})\delta(z-z^{\prime}), \quad ^\forall x,y,z\in\mathbb{R}\label{deq:1e}
\end{align}
It then follows that position states with different positions are orthogonal to each other. The norm of position states is infinite: $\langle\mathbf{r}\vert\mathbf{r}\rangle=\delta(0)=\infty$, so these are not allowed states of particles. This also implies that no two fluid particles can occupy the same position at the same time.

We visualize the state $\vert\mathbf{r}\rangle$ as the state of a fluid particle perfectly localized at position $\mathbf{r}$, but this is just an idealization.

Normalizable states can be easily constructed by using superposition of position states using the completeness relation or resolution of identity
\begin{align}
\int_{\Omega_{h}(\mathbf{r})}\vert\mathbf{r}^{\prime}\rangle\langle\mathbf{r}^{\prime}\vert d\Omega(\mathbf{r}^{\prime})&=\hat{1}\label{deq:b1}
\end{align}
which is consistent with the inner product (\ref{deq:1e}). 
At this point we introduce a new state vector describing the mass density of a fluid particle  
\begin{align}
\text{mass density state:}\quad \vert\rho\rangle, \quad \rho > 0~\text{on}~\Omega_{h}(\mathbf{r})
\end{align}
To project operators into function space we simply take an overlap of the coordinate basis state $\langle\mathbf{r}\vert$ with the state $\vert\rho\rangle$ yielding the value of the real-valued function $\rho$ at position $\mathbf{r}$; namely
\begin{align}
\rho(\mathbf{r})&=\langle\mathbf{r}\vert\rho\rangle\in\mathbb{R}
\end{align} 

\section{Filtering Integral Transform (FIT)}
The convolution or filtering problem can be stated formally as follows:
Given the continuum field\{$\rho(\mathbf{r})$, $p(\mathbf{r})$, $\mathbf{u}(\mathbf{r})$\} defined on a domain $\Omega$, compute local approximations \{$\langle\rho_{h}(\mathbf{r})\rangle$, $\langle p_{h}(\mathbf{r})\rangle$, $\widetilde{\mathbf{u}}_{h}(\mathbf{r})$\} which faithfully represent the behavior of the continuum field on scales above some, user defined, filter length (here denoted $h$) and which truncates scales smaller than $\mathcal{O}(h)$.

Standard SPH is based upon the fundamental principle that any field $\Phi:\Omega_{h}(\mathbf{r})\to\mathbb{R}$ can be expressed by an integral interpolant $\langle\Phi_{h}(\mathbf{r})\rangle:=\int_{\Omega_{h}(\mathbf{r})}\Phi(\mathbf{r}^{\prime})w_{h}(\mathbf{r}-\mathbf{r}^{\prime})d^{\nu}\mathbf{r}^{\prime}$. However, following the discussion from the previous section, if the particles are uniformly distributed locally and the kernel support does not intersect the domain boundaries, the local approximation becomes $\langle\Phi_{h}(\mathbf{r})\rangle=\Phi(\mathbf{r})+\mathcal{O}(h^{2})$ meaning SPH is second order accurate. The integral interpolant then assumes the standard form and constitutes the most fundamental principle as  $\Phi(\mathbf{r})=\int_{\Omega_{h}(\mathbf{r})}\Phi(\mathbf{r}^{\prime})w_{h}(\mathbf{r}-\mathbf{r}^{\prime})d^{\nu}\mathbf{r}^{\prime}+\mathcal{O}(h^{2})$. Formally, this is called zeroth order deconvolution which results in second order accuracy when bell-shaped convolution filters are used. Unfortunately, the local particle distribution is no longer uniform, particles get disordered during the evolution of a system. Therefore, error estimation becomes a very challenging task. 

In this section, we demonstrate that application of the FIT to the CNSEs leads to a version SPH that is consistent with explicit LES. The concept of local fluctuations and the relation with local uniformity of particle distribution is discussed. The section concludes by stressing the implications of the choice of velocity smoothing on the complexity of the mathematical structure of the smoothed CNSEs.

Assuming that the convolution operator $\hat{C}_{h}:L^{2}(\Omega_{h})\to L^{2}(\Omega_{h})$ is invertible. Then, in operator form, the action of this operator on an input state $\vert\rho\rangle$ yields a new state given by
\begin{align}
\vert\bar{\rho}_{h}\rangle &=\hat{C}_{h}\vert\rho\rangle\label{deq:bb5}
\end{align}
\begin{figure}[H]
	\centering
	\tikzstyle{int}=[draw, fill=blue!20, minimum size=8em]
	\tikzstyle{init} = [pin edge={to-,thin,black}]
	\begin{tikzpicture}[node distance=5.5cm,auto,>=latex']
	\node [int] (a) {$\hat{C}_{h}$};
	\node (b) [left of=a,node distance=3cm, coordinate] {a};
	\node [coordinate] (end) [right of=a, node distance=3cm]{};
	\path[->] (b) edge node {$\vert\rho\rangle$} (a);
	\draw[->] (a) edge node {$\vert\bar{\rho}_{h}\rangle$} (end);
	\end{tikzpicture}
	
	\caption[Filtering Integral Transform]{Convolution process. The output state $\vert\bar{\rho}_{h}\rangle$ acquires the smoothness property of the convolution operator as well as scale-dependence.}
	\label{fig:convolution}
\end{figure}
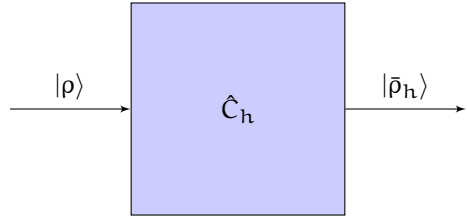  
The above result can be pictured in function space by multiplying from the left by the bra vector $\langle\mathbf{r}\vert$ and inserting a complete set of states as follows
\begin{align}
\langle\mathbf{r}\vert\bar{\rho}_{h}\rangle
&=\langle\mathbf{r}\vert\hat{C}_{h}\hat{1}\vert\rho\rangle\nonumber\\ 
&=\langle\mathbf{r}\vert\hat{C}_{h}\int_{\Omega_{h}(\mathbf{r})}\vert \mathbf{r}^{\prime}\rangle\langle \mathbf{r}^{\prime}\vert d\Omega(\mathbf{r}^{\prime})\vert\rho\rangle\nonumber\\ 
&= \int_{\Omega_{h}(\mathbf{r})}\langle \mathbf{r}\vert\hat{C}_{h}\vert \mathbf{r}^{\prime}\rangle\langle \mathbf{r}^{\prime}\vert\rho\rangle d\Omega(\mathbf{r}^{\prime})\nonumber\\
\langle{\rho}_{h}(\mathbf{r})\rangle &= \int_{\Omega_{h}(\mathbf{r})}\rho (\mathbf{r}^{\prime})w_{h}(\mathbf{r}-\mathbf{r}^{\prime}) d\Omega(\mathbf{r}^{\prime})\label{deq:bb4}
\end{align}
where $\langle r\vert\bar{\rho}_{h}\rangle = \langle{\rho}_{h}(\mathbf{r})\rangle$ is the the smoothed or filtered mass density. The convolution filter is defined as the "matrix element" of the convolution operator $\hat{C}_{h}$, i.e. $w_{h}(\mathbf{r}-\mathbf{r}^{\prime}):=\langle \mathbf{r}\vert\hat{C}_{h}\vert\mathbf{r}^{\prime}\rangle$.

Similar results can be obtained for the momentum density and pressure. With brevity, we present these in the following proposition.

\begin{myprop}[FIT for fluids]\label{prop:conv}
	Let $\Omega_{h}(\mathbf{r})$ be a locally compact space within the fluid domain $\Omega$. Then the filtered mass density, momentum density and pressure are given by the FIT; for each $w_{h}\in C^{\infty}_{c}(\Omega_{h})$
	\begin{align}
	\langle\rho_{h}(\mathbf{r})\rangle &=\int_{\Omega_{h}(\mathbf{r})}\rho(\mathbf{r}^{\prime})w_{h}(\mathbf{r}-\mathbf{r}^{\prime})d\Omega(\mathbf{r}^{\prime})\label{deq:1}\\
	\langle\rho_{h}(\mathbf{r})\rangle\widetilde{\mathbf{u}}_{h}(\mathbf{r}) &=\int_{\Omega_{h}(\mathbf{r})}\rho(\mathbf{r}^{\prime})\mathbf{u}(\mathbf{r}^{\prime})w_{h}(\mathbf{r}-\mathbf{r}^{\prime})d\Omega(\mathbf{r}^{\prime})\label{deq:1p}\\
	\langle p_{h}(\mathbf{r})\rangle &=\int_{\Omega_{h}(\mathbf{r})}p(\mathbf{r}^{\prime})w_{h}(\mathbf{r}-\mathbf{r}^{\prime})d\Omega(\mathbf{r}^{\prime})\label{deq:p1}
	\end{align} 
\end{myprop}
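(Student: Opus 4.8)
The plan is to obtain the momentum-density and pressure identities by repeating, essentially verbatim, the bra--ket computation already carried out for the mass density in~(\ref{deq:bb4}), and then to read the Favre-filtered velocity off the filtered momentum density through a division that is legitimate precisely because $\Omega$ is a fluid domain.

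First I would observe that~(\ref{deq:1}) is nothing other than~(\ref{deq:bb4}) restated: one applies the invertible convolution operator $\hat{C}_{h}$ to the mass-density state $\vert\rho\rangle$, projects the output $\vert\bar{\rho}_{h}\rangle$ onto the position bra $\langle\mathbf{r}\vert$, inserts the resolution of identity~(\ref{deq:b1}) over $\Omega_{h}(\mathbf{r})$, and then uses the matrix-element definition $w_{h}(\mathbf{r}-\mathbf{r}^{\prime}):=\langle\mathbf{r}\vert\hat{C}_{h}\vert\mathbf{r}^{\prime}\rangle$ together with $\langle\mathbf{r}^{\prime}\vert\rho\rangle=\rho(\mathbf{r}^{\prime})$. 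The pressure identity~(\ref{deq:p1}) follows from the same three steps applied to the pressure state $\vert p\rangle$; here the hypothesis $w_{h}\in C^{\infty}_{c}(\Omega_{h})$ is what guarantees that the resulting integral is finite and that no boundary contribution from $\partial\Omega(\mathbf{r})$ appears.

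Next, for the momentum density I would introduce the vector-valued state $\vert\rho\mathbf{u}\rangle$, defined componentwise by $\langle\mathbf{r}\vert\rho u_{i}\rangle=\rho(\mathbf{r})\,u_{i}(\mathbf{r})$, and let $\hat{C}_{h}$ act component by component. The identical projection--insertion--matrix-element argument yields $\langle\mathbf{r}\vert\hat{C}_{h}\vert\rho\mathbf{u}\rangle=\int_{\Omega_{h}(\mathbf{r})}\rho(\mathbf{r}^{\prime})\mathbf{u}(\mathbf{r}^{\prime})\,w_{h}(\mathbf{r}-\mathbf{r}^{\prime})\,d\Omega(\mathbf{r}^{\prime})$, i.e.\ the filtered momentum density; I would then \emph{define} the Favre-filtered (mass-weighted) velocity $\widetilde{\mathbf{u}}_{h}(\mathbf{r})$ to be this quantity divided by $\langle\rho_{h}(\mathbf{r})\rangle$, which is exactly the content of~(\ref{deq:1p}).

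The single place where a hypothesis is genuinely invoked --- and hence the step I regard as the crux --- is the legitimacy of that last division: $\widetilde{\mathbf{u}}_{h}$ is well defined only when $\langle\rho_{h}(\mathbf{r})\rangle>0$ at every $\mathbf{r}$ whose test space lies inside $\Omega$. I would establish this by combining three facts already on the table: $\rho(\mathbf{r}^{\prime})>0$ on $\Omega_{h}(\mathbf{r})$ because $\Omega$ is a fluid domain; $w_{h}\ge 0$ by semi-positive definiteness; and $\int_{\Omega_{h}(\mathbf{r})}w_{h}\,d\Omega=1$ by completeness, so $w_{h}$ does not vanish identically on the test space. Hence $\langle\rho_{h}(\mathbf{r})\rangle=\int\rho\,w_{h}\,d\Omega>0$ and the Favre velocity is unambiguously defined. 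Beyond this I anticipate no further obstacle: what remains is the routine bookkeeping of the bra--ket manipulation, word for word the mass-density case.
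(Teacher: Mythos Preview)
Your proposal is correct and matches the paper's own treatment: the paper derives~(\ref{deq:bb4}) for the mass density via the bra--ket manipulation and then simply states that ``similar results can be obtained for the momentum density and pressure,'' presenting the proposition without further argument and remarking afterward that $\widetilde{\mathbf{u}}_{h}:=\langle\mathbf{P}_{h}\rangle/\langle\rho_{h}\rangle$ by physical consideration. Your explicit verification that $\langle\rho_{h}(\mathbf{r})\rangle>0$ (so that the Favre quotient is well defined) is a detail the paper leaves implicit, but otherwise the route is identical.
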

The smoothed field \{$\langle\rho_{h}(\mathbf{r})\rangle$, $\langle p_{h}(\mathbf{r})\rangle$, $\widetilde{\mathbf{u}}_{h}(\mathbf{r})$\} represent the interaction of fluid particles located at $\mathbf{r}$, $\mathbf{r}^{\prime}\in\Omega_{h}(\mathbf{r})$. Furthermore, the choice of the velocity smoothing here arises from the physical consideration that the smoothed velocity $\widetilde{\mathbf{u}}_{h}:=\langle\mathbf{P}_{h}\rangle/\langle\rho_{h}\rangle$ where $\mathbf{P}$ is the momentum density.

\subsection{Effect of the convolution operator: geometrical analysis}\label{subsub:geometrical}
In the SPH method it is well known that the most important element of the method is the convolution kernel. Here, a criteria for measuring the effect of this convolution kernel through a geometrical analysis is presented.  
\begin{figure}[H]
	\centering
	\noindent\begin{tikzpicture}[> = stealth]
	\draw[ultra thick,red, ->]  (1,1) -- node[below] {$\vert\rho\rangle$} +(4,1);
	\draw[ultra thick,red, ->]  (1,1) -- node[above,sloped] {$\vert\bar{\rho}_{h}\rangle=\hat{C}_{h}\vert\rho\rangle$} +(6,4);
	\draw[densely dotted,blue,-] (5,2) -- node[below] {$U_{\vert\rho\rangle}$} +(4,1);
	\draw[ultra thick,blue, ->]  (7.5882,2.6471) -- node[right] {$\vert\rho_{h}^{\perp}\rangle$} +(-0.5882,2.3529);
	\end{tikzpicture}
	\caption[Pictorial view of convolution operator effect]{The mass density state $\vert\rho\rangle$~and the subspace $U_{\vert\rho\rangle}$ generated by it. The effect of the convolution operator is to shift the mass density state $\vert\rho\rangle$ into a new vector subspace spanned by the deformed state $\vert\bar{\rho}_{h}\rangle$. The norm of the orthogonal complement $\vert\rho^{\perp}\rangle$ is the error in resolving identity using a convolution operator.}
	\label{fig:convVector}
\end{figure}
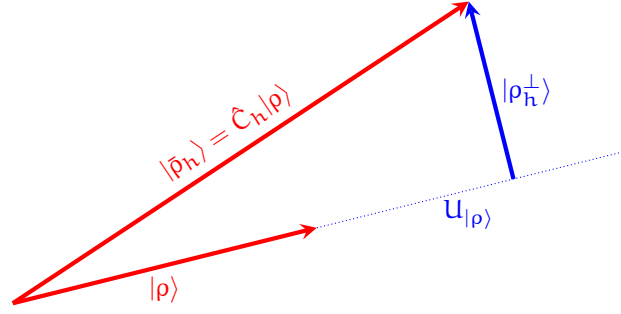
Figure~\ref{fig:convVector} demonstrates the effect of the convolution operator. Consider the vector subspace $U_{\vert\rho\rangle}$ generated or spanned by the state $\vert\rho\rangle$. When the convolution operator $\hat{C}_{h}$ acts on the undeformed state $\vert\rho\rangle$ the output is a deformed state~$\vert\bar{\rho}_{h}\rangle$ that is shifted in a different direction due to the fact that $\vert\rho\rangle$ is not an eigenstate of the convolution operator. If we define an orthogonal projector to the vector subspace $U_{\vert\rho\rangle}$ 
\begin{align}
\hat{P}_{U_{\vert\rho\rangle}} :=\frac{\vert\rho\rangle\langle\rho\vert}{\langle\rho\vert\rho\rangle}
\end{align}
Then an orthogonal projection of the state $\vert\bar{\rho}_{h}\rangle$ onto $U_{\vert\rho\rangle}$ yields 
\begin{align}
\hat{P}_{U_{\vert\rho\rangle}}\vert\bar{\rho}_{h}\rangle &=\frac{\langle\rho\vert\bar{\rho}_{h}\rangle}{\langle\rho\vert\rho\rangle}\vert\rho\rangle
\end{align}
which is the component of the deformed state $\vert\bar{\rho}_{h}\rangle$ along the undeformed state~$\vert\rho\rangle$.
The orthogonal complement state $\vert\rho^{\perp}_{h}\rangle$ spans the subspace $U^{\perp}_{\vert\rho\rangle}$ and is given by
\begin{align}
\vert\rho^{\perp}_{h}\rangle &=\vert\bar{\rho}_{h}\rangle-\frac{\langle\rho\vert\bar{\rho}_{h}\rangle}{\langle\rho\vert\rho\rangle}\vert\rho\rangle
\end{align}
We then claim that the error or uncertainty in approximating identity using the convolution operator is given by the norm of the perpendicular state.
\begin{align}
\Delta\hat{C}_{h}(\rho)^{2}&=\langle\rho^{\perp}_{h}\vert\rho^{\perp}_{h}\rangle =\langle\bar{\rho}_{h}\vert\bar{\rho}_{h}\rangle-\frac{\vert\langle\rho\vert\bar{\rho}_{h}\rangle\vert^{2}}{\langle\rho\vert\rho\rangle}\label{deq:2c}
\end{align}
Thus a good convolution operator (kernel) is one that minimizes the error $\Delta\hat{C}_{h}(\rho)$. It is important to also note that in the continuum limit this error is zero as the convolution operator becomes identical to the identity operator. However, for any practical convolution operator (kernel) this error is non-zero. Therefore this gives us a measure of how well a convolution operator approximates identity, or equivalently, how well a convolution kernel approximates the Dirac-delta function. 

In function space (\ref{deq:2c}) is expressed as    
\begin{align}
\Delta\hat{C}_{h}(\rho)^{2}&=\int_{\Omega}\vert\langle\rho_{h}(\mathbf{r}^{\prime})\rangle\vert^{2}d\Omega(\mathbf{r}^{\prime})-\frac{\vert\int_{\Omega}\rho(\mathbf{r}^{\prime})\langle\rho_{h}(\mathbf{r}^{\prime})\rangle d\Omega(\mathbf{r}^{\prime})\vert^{2}}{\int_{\Omega}\vert\rho(\mathbf{r}^{\prime})\vert^{2}d\Omega(\mathbf{r}^{\prime})}\label{deq:3c}
\end{align}
For spherically symmetric convolution kernels, a Taylor expansion of $\rho(\mathbf{r}^{\prime})$ about $\mathbf{r}^{\prime}$ shows that the error is of second order $\Delta\hat{C}_{h}(\rho)\equiv\mathcal{O}(h^{2})$.
Clearly, minimizing this error is thus fundamental for better approximation of the resolution of identity using a convolution operator (kernel).  

\section{Application of FIT}
Our choice of the  filtering procedure defined by proposition \ref{prop:conv} is to derive a set of integro-differential equations that are in principle "equivalent" to the filtered equations of explicit LES. Therefore, well established techniques such as turbulence of sub-grid stress tensors  in the LES context can be applied to this version of SPH. We first prepare a useful theorem that is extensively used in the this section.
\begin{mytheo}[Reynolds' transport theorem]\label{theo:rtt}
	Let f:$\bar{\mathcal{D}}\times\mathcal{I}\to\mathbb{R}$ be a smooth and continuous function and $\Omega_{t}:=\varphi(\Omega_{0},t)\subseteq\mathcal{D}$, then for each $t\in\mathcal{I}$ and $\Omega_{0}\subseteq\mathcal{D}$ as an arbitrary reference fluid domain. Then,
	\begin{align}
	\frac{d}{dt}\int_{\Omega_{t}}f(\mathbf{r},t)d^{n}\mathbf{r}&=\int_{\Omega_{t}}\left(\frac{d}{dt}f(\mathbf{r},t)+f(\mathbf{r},t)\nabla\cdot\mathbf{u}(\mathbf{r},t)\right)d^{n}\mathbf{r}
	\end{align}
\end{mytheo}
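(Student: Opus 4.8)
The idea is the classical one: convert the moving-domain integral into a fixed-domain integral over the reference configuration $\Omega_0$ so that the time derivative can be moved inside, then use the evolution equation for the Jacobian of the flow map.

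First I would change variables via the flow map $\mathbf{r}=\varphi(\mathbf{r}_0,t)$, writing $d^n\mathbf{r}=J(\mathbf{r}_0,t)\,d^n\mathbf{r}_0$ with $J:=\det\!\big(\partial\varphi/\partial\mathbf{r}_0\big)>0$ (positivity holds since $\varphi(\cdot,0)=\mathrm{id}$ and $J$ is continuous and nonvanishing for a smooth flow). Since $\Omega_0$ does not depend on $t$, I can differentiate under the integral sign:
\begin{align}
\frac{d}{dt}\int_{\Omega_t}f(\mathbf{r},t)\,d^n\mathbf{r}
=\int_{\Omega_0}\frac{d}{dt}\Big[f\big(\varphi(\mathbf{r}_0,t),t\big)\,J(\mathbf{r}_0,t)\Big]\,d^n\mathbf{r}_0 .
\end{align}
Applying the product rule, the $f$-factor contributes $\tfrac{d}{dt}\big[f(\varphi(\mathbf{r}_0,t),t)\big]$, which by the chain rule and $\dot\varphi=\mathbf{u}(\varphi,t)$ is precisely the material derivative $\tfrac{df}{dt}$ evaluated along the trajectory.

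The key step is the evolution of $J$. Differentiating $F:=\partial\varphi/\partial\mathbf{r}_0$ in time and using $\dot\varphi=\mathbf{u}(\varphi,t)$ gives $\dot F=\partial\mathbf{u}/\partial\mathbf{r}_0=(\nabla_{\mathbf{r}}\mathbf{u})\,F$. Jacobi's formula $\tfrac{d}{dt}\det F=\det F\,\tr(F^{-1}\dot F)$ then yields Euler's expansion formula
\begin{align}
\frac{dJ}{dt}=J\,\tr(\nabla_{\mathbf{r}}\mathbf{u})=J\,\nabla\cdot\mathbf{u}.
\end{align}
Substituting this in, the integrand over $\Omega_0$ becomes $\big(\tfrac{df}{dt}+f\,\nabla\cdot\mathbf{u}\big)J$, and changing variables back ($J\,d^n\mathbf{r}_0=d^n\mathbf{r}$) produces the claimed identity.

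\textbf{Main obstacle.} The only nonroutine ingredient is Euler's expansion formula $\dot J=J\,\nabla\cdot\mathbf{u}$; everything else is the change-of-variables formula and differentiation under the integral sign (justified by smoothness of $f$ and $\varphi$ and compactness of $\Omega_0$). I would either cite Jacobi's determinant formula or prove it in one line by expanding $\det(F+\varepsilon\dot F)$ to first order in $\varepsilon$. A minor point worth a remark is the legitimacy of interchanging $\tfrac{d}{dt}$ with $\int_{\Omega_0}$, which is immediate here since the integrand and its time derivative are continuous on the compact set $\bar\Omega_0\times\mathcal{I}$.
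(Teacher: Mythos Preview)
Your argument is the standard and correct derivation of Reynolds' transport theorem: pull back to the reference domain via the flow map, differentiate under the integral, apply Euler's expansion formula $\dot J=J\,\nabla\cdot\mathbf{u}$ (obtained from Jacobi's determinant identity), and push forward again. There is nothing missing or incorrect.

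However, there is nothing to compare against: the paper states Theorem~\ref{theo:rtt} without proof. It is introduced only as a tool (``We first prepare a useful theorem that is extensively used in this section'') and is immediately applied in the derivation of the filtered continuity and momentum equations. So your proposal supplies a proof where the paper offers none; the classical pull-back/Jacobian argument you give is exactly what one would expect here.
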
 

\subsection{Filtered continuity equation}
To obtain a consistent set of filtered CNSEs, we proceed as follows. Consider a test particle located at position $\mathbf{r}$ with a test space (domain of influence) $\Omega_{h}(\mathbf{r})$, a compact space. When a measurement is performed over this test space to determine the test particle's physical attributes, the measurement outcomes are the local field approximations $\{\langle\rho_{h}(\mathbf{r})\rangle, \langle p_{h}(\mathbf{r})\rangle, \widetilde{\mathbf{u}}_{h}(\mathbf{r})\}$ defined by the FIT of proposition \ref{prop:conv}. These are the observables or macroscopic variables that would faithfully represent the behavior of the underlying, disordered continuum fields $\{\rho(\mathbf{r}), p(\mathbf{r}), \mathbf{u}(\mathbf{r})\}$ above the filter width $h$ consequently truncating scales smaller than $\mathcal{O}(h)$. 
Let the set $\{\mathcal{P}_{j}\vert~\vert\vert\mathbf{r}-\mathbf{r}^{\prime}<2h\vert\vert,~j = 1,...,N_{n}\}$ represent all other material elements within the test space. For this set of support material particles, the point-form of physical attributes $\{\rho(\mathbf{r}_{j}), p(\mathbf{r}_{j}), \mathbf{u}(\mathbf{r}_{j}) \vert~ j=1,...,N_{n}\}\in\textup{L}^{2}(\Omega_{h};\mathbb{R}^{n})$ are given. We then have after testing the continuity equation 

\begin{align}
& \int_{\Omega_{h}(\mathbf{r})}\left(\frac{d\rho(\mathbf{r}_{j})}{dt}+\rho(\mathbf{r}_{j})\nabla_{j}\cdot\mathbf{u}(\mathbf{r}_{j})\right)w_{h}(\mathbf{r}-\mathbf{r}_{j})d\Omega(\mathbf{r}_{j})\nonumber\\
& \qquad {} \stackrel{N_n\uparrow\infty}{\longrightarrow} \int_{\Omega_{h}(\mathbf{r})}\left(\frac{d\rho(\mathbf{r}^{\prime})}{dt}+\rho(\mathbf{r}^{\prime})\nabla^{\prime}\cdot\mathbf{u}(\mathbf{r}^{\prime})\right)w_{h}(\mathbf{r}-\mathbf{r}^{\prime})d\Omega(\mathbf{r}^{\prime})
\end{align}
where now $\mathbf{r}^{\prime}\in\Omega(\mathbf{r})$ is the position of a fluid particle in the test space. Here \emph{weak convergence} has been assumed, provided the test space is sufficiently populated uniformly by fluid particles. Then the weak form of the continuity equation becomes
\begin{align}
\int_{\Omega_{h}(\mathbf{r})}\left(\frac{d\rho(\mathbf{r}^{\prime})}{dt}+\rho(\mathbf{r}^{\prime})\nabla^{\prime}\cdot\mathbf{u}(\mathbf{r}^{\prime})\right)w_{h}(\mathbf{r}-\mathbf{r}^{\prime})d^{d}(\mathbf{r}^{\prime})&=0\quad{} ^\forall w_{h}\in C_{c}^{\infty}(\Omega_{h})\label{eqn:2017c1}
\end{align}
which is further simplified to 
\begin{align}
\int_{\Omega_{h}(\mathbf{r})}\left\{\frac{d}{dt}\bigg(\rho(\mathbf{r}^{\prime})w_{h}\bigg)+\bigg(\rho(\mathbf{r}^{\prime})w_{h}\bigg)\nabla^{\prime}\cdot\mathbf{u}(\mathbf{r}^{\prime})\right\}d^{d}\mathbf{r}^{\prime}&=\int_{\Omega_{h}(\mathbf{r})}\rho(\mathbf{r}^{\prime})\frac{dw_{h}}{dt}d^{d}\mathbf{r}^{\prime}\label{eqn:2017c6}
\end{align}
Using the Reynolds transport theorem \ref{theo:rtt} it is possible to simplify (\ref{eqn:2017c6}) even further yielding 
\begin{align}
\frac{d}{dt}\int_{\Omega_{h}(\mathbf{r})}\rho(\mathbf{r}^{\prime})w_{h}d\Omega(\mathbf{r}^{\prime})&=
\int_{\Omega_{h}(\mathbf{r})}\rho(\mathbf{r}^{\prime})\frac{dw_{h}}{dt}d\Omega(\mathbf{r}^{\prime})\label{eqn:2017c7}
\end{align}
Since the goal is to compute the local approximations for a target particle using the disordered fields $\{\rho(\mathbf{r}), p(\mathbf{r}), \mathbf{u}(\mathbf{r})\}$, we move the target particle with the local velocity. Accordingly,
\begin{align}
\textup{target particle:}\quad \frac{d\mathbf{r}}{dt}:=\widetilde{\mathbf{u}}_{h}(\mathbf{r})\quad \textup{support particle:}\quad\frac{d\mathbf{r}^{\prime}}{dt}:=\mathbf{u}(\mathbf{r}^{\prime})
\end{align}
The right hand side of the above equation is finally simplified with the help of the chain rule of calculus.
\begin{align}
\frac{d}{dt}\int_{\Omega_{h}(\mathbf{r})}\rho(\mathbf{r}^{\prime})w_{h}d\Omega(\mathbf{r}^{\prime})&=
\int_{\Omega_{h}(\mathbf{r})}\rho(\mathbf{r}^{\prime})\frac{dw_{h}}{dt}d\Omega(\mathbf{r}^{\prime})\nonumber\\
\frac{d}{dt}\int_{\Omega(\mathbf{r})}\rho(\mathbf{r}^{\prime})w_{h}d\Omega(\mathbf{r}^{\prime})&=
\int_{\Omega(\mathbf{r})}\rho(\mathbf{r}^{\prime})\left(\frac{d\mathbf{r}}{dt}\cdot\frac{\partial w_{h}}{\partial\mathbf{r}}+\frac{d\mathbf{r}^{\prime}}{dt}\cdot\frac{\partial w_{h}}{\partial\mathbf{r}^{\prime}}\right)d\Omega(\mathbf{r}^{\prime})\nonumber\\
\therefore\frac{d}{dt}\langle\rho_{h}(\mathbf{r}\rangle &=
\int_{\Omega(\mathbf{r})}\rho(\mathbf{r}^{\prime})\bigg(\widetilde{\mathbf{u}}_{h}(\mathbf{r})-\mathbf{u}(\mathbf{r}^{\prime})\bigg)\cdot\nabla w_{h}d\Omega(\mathbf{r}^{\prime})\label{deq:2018d4}
\end{align}

Equation (\ref{deq:2018d4}) is the filtered form of the continuity equation in integro-differential form. Using the FIT, it is easy to prove that, when expressed in differential form, this equation reduces to the canonical form of the continuity equation (\ref{eq2016:1}) but with the unfiltered variables replaced by the filtered ones. To prove this, we unplug the space derivative $\nabla$ from under the integral in (\ref{deq:2018d4}) and using the FIT of proposition \ref{prop:conv} as follows;
\begin{proof}
	Unplugging the space derivatives from the integral in (\ref{deq:2018d4}) we have,
	\begin{align}
	\frac{d}{dt}\langle\rho_{h}(\mathbf{r})\rangle&=\widetilde{\mathbf{u}}_{h}(\mathbf{r})\cdot\nabla\int_{\Omega_{h}(\mathbf{r})}\rho(\mathbf{r}^{\prime}) w_{h}(\mathbf{r}-\mathbf{r}^{\prime}) d\Omega(\mathbf{r}^{\prime}) \nonumber \\ & \quad -\nabla\cdot\int_{\Omega_{h}(\mathbf{r})}\rho(\mathbf{r}^{\prime})\mathbf{u}(\mathbf{r}^{\prime})w_{h}(\mathbf{r}-\mathbf{r}^{\prime})d\Omega(\mathbf{r}^{\prime}) \nonumber \\
	&=\widetilde{\mathbf{u}}_{h}(\mathbf{r})\cdot\nabla\langle\rho_{h}(\mathbf{r})\rangle \nonumber-\nabla\cdot\big(\langle\rho_{h}(\mathbf{r})\rangle\widetilde{\mathbf{u}}_{h}(\mathbf{r})\big)\quad\text{by the FIT}\nonumber \\  \therefore\frac{d}{dt}\langle\rho_{h}(\mathbf{r})\rangle{}&=-\langle\rho_{h}(\mathbf{r})\rangle\nabla\cdot\widetilde{\mathbf{u}}_{h}(\mathbf{r})\label{deq:2018d5}
	\end{align}
	and we arrive at the filtered form of the point form of the continuum continuity equation, consistent with explicit LES.
\end{proof}

Furthermore, after the filtering process, the material derivative becomes
\begin{align}
\frac{d}{dt} &=\frac{\partial}{\partial t} +\widetilde{\mathbf{u}}_{h}\cdot\nabla
\end{align}
\begin{corollary}[filtered velocity divergence]
	Due to the equivalence  of (\ref{deq:2018d4}) and (\ref{deq:2018d5}), the velocity divergence in a continuum can be calculated as an integral
	\begin{align}
	-\langle\rho_{h}(\mathbf{r})\rangle\nabla\cdot\widetilde{\mathbf{u}}_{h}(\mathbf{r})&=\int_{\Omega(\mathbf{r})}\rho(\mathbf{r}^{\prime})\bigg(\widetilde{\mathbf{u}}_{h}(\mathbf{r})-\mathbf{u}(\mathbf{r}^{\prime})\bigg)\cdot\nabla w_{h}d\Omega(\mathbf{r}^{\prime})\label{deq:2018d6}
	\end{align}
\end{corollary}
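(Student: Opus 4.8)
The plan is to read the identity straight off the two equivalent forms of the filtered continuity equation that have already been established, so the argument is essentially transitivity of equality. Equation (\ref{deq:2018d4}) expresses the material derivative $\tfrac{d}{dt}\langle\rho_{h}(\mathbf{r})\rangle$ as the integral appearing on the right-hand side of the claimed formula; it was obtained from the weak continuity equation (\ref{eqn:2017c1}) by applying Reynolds' transport theorem (Theorem~\ref{theo:rtt}), then using the chain rule together with the prescribed trajectories $\tfrac{d\mathbf{r}}{dt}=\widetilde{\mathbf{u}}_{h}(\mathbf{r})$ for the target particle and $\tfrac{d\mathbf{r}^{\prime}}{dt}=\mathbf{u}(\mathbf{r}^{\prime})$ for the support particles. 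Equation (\ref{deq:2018d5}) expresses the \emph{same} material derivative as $-\langle\rho_{h}(\mathbf{r})\rangle\nabla\cdot\widetilde{\mathbf{u}}_{h}(\mathbf{r})$, obtained by pulling the spatial derivatives out from under the integral and invoking the FIT of Proposition~\ref{prop:conv} applied to both $\langle\rho_{h}\rangle$ in (\ref{deq:1}) and $\langle\rho_{h}\rangle\widetilde{\mathbf{u}}_{h}$ in (\ref{deq:1p}).

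First I would simply restate (\ref{deq:2018d4}) and (\ref{deq:2018d5}) side by side, observe that their left-hand sides are literally the same object $\tfrac{d}{dt}\langle\rho_{h}(\mathbf{r})\rangle$, and conclude by equating the right-hand sides. No new calculation is required: subtracting the two displays and cancelling the common term gives exactly
\[
-\langle\rho_{h}(\mathbf{r})\rangle\nabla\cdot\widetilde{\mathbf{u}}_{h}(\mathbf{r})=\int_{\Omega(\mathbf{r})}\rho(\mathbf{r}^{\prime})\bigl(\widetilde{\mathbf{u}}_{h}(\mathbf{r})-\mathbf{u}(\mathbf{r}^{\prime})\bigr)\cdot\nabla w_{h}\,d\Omega(\mathbf{r}^{\prime}),
\]
which is the assertion. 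The interpretive content — that a pointwise differential quantity, the filtered velocity divergence, admits an exact nonlocal integral representation over the test space — then follows by reading the equality from right to left.

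The only delicate point is not in the algebra but in the hypotheses inherited from the two derivations being spliced together: (i) the weak convergence of the discrete particle measure to the continuum integral used to pass to (\ref{eqn:2017c1}), which is legitimate only when the test space $\Omega_{h}(\mathbf{r})$ is sufficiently and uniformly populated by fluid particles; and (ii) the step of unplugging $\nabla$ from under the integral in obtaining (\ref{deq:2018d5}), which tacitly assumes $w_{h}\in C_{c}^{\infty}(\Omega_{h})$ has support strictly interior to $\Omega$, so that differentiating under the integral sign produces no boundary contribution. I would therefore state the corollary as holding precisely on the regime where both (\ref{deq:2018d4}) and (\ref{deq:2018d5}) are valid, namely interior points whose test space does not meet $\partial\Omega$ and is uniformly sampled; near boundaries or in strongly disordered regions the identification would acquire correction terms. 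With those caveats made explicit, the corollary is immediate.
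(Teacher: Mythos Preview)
Your proposal is correct and matches the paper's approach exactly: the paper states this as a corollary with no separate proof, precisely because it follows by equating the right-hand sides of (\ref{deq:2018d4}) and (\ref{deq:2018d5}), which share the common left-hand side $\tfrac{d}{dt}\langle\rho_{h}(\mathbf{r})\rangle$. Your added caveats about the interior-support and uniform-sampling hypotheses are a welcome sharpening that the paper leaves implicit.
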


We emphasize that (\ref{deq:2018d6}) is the most fundamental result of the filtering process. It will be used to generate integral representations of the pressure gradient and divergence of the stress tensor.

\subsection{Filtered momentum equation}
Similar to the continuity equation above, we start with the weak form 
\begin{align}
\int_{\Omega}\left(\rho(\mathbf{r}^{\prime})\frac{d\mathbf{u}(\mathbf{r}^{\prime})}{dt}-\nabla^{\prime}\cdot\underline{\underline{\tau}}(\mathbf{r}^{\prime})-\rho(\mathbf{r}^{\prime})\mathbf{b}(\mathbf{r}^{\prime})\right)w_{h}d^{d}\mathbf{r}^{\prime}&=0\quad{} ^\forall w_{h}\in C_{c}^{\infty}(\Omega_{h})\label{deq:2018d7}
\end{align} 
which may be further re-arranged as follows
\begin{align}
\int_{\Omega}\rho(\mathbf{r}^{\prime})\frac{d}{dt}\bigg(\mathbf{u}(\mathbf{r}^{\prime})w_{h}\bigg)d^{d}\mathbf{r}^{\prime}&=\int_{\Omega}\rho(\mathbf{r}^{\prime})\mathbf{u}(\mathbf{r}^{\prime})\frac{dw_{h}}{dt}d^{d}\mathbf{r}^{\prime}\nonumber\\
&\qquad+\langle\nabla\cdot\underline{\underline{\tau}},w_{h}\rangle+\langle\rho_{h}(\mathbf{r})\rangle\widetilde{\mathbf{b}}_{h}(\mathbf{r})\label{deq:2018d8}
\end{align}
where the FIT \ref{prop:conv} has directly been applied to the last term on the right hand side of (\ref{deq:2018d8}). Note the inner product notation for the stress term; this will be expanded shortly. The next thing is to apply the Reynolds transport theorem \ref{theo:rtt} along wih the FIT to the left hand side yielding 
\begin{align}
\frac{d}{dt}\int_{\Omega}\rho(\mathbf{r}^{\prime})\mathbf{u}(\mathbf{r}^{\prime})w_{h}d^{d}\mathbf{r}^{\prime}&=\int_{\Omega}\rho(\mathbf{r}^{\prime})\mathbf{u}(\mathbf{r}^{\prime})\frac{dw_{h}}{dt}d^{d}\mathbf{r}^{\prime}\nonumber\\
&\qquad+\langle\nabla\cdot\underline{\underline{\tau}},w_{h}\rangle+\langle\rho_{h}(\mathbf{r})\rangle\widetilde{\mathbf{b}}_{h}(\mathbf{r})\nonumber\\
\frac{d}{dt}\bigg(\langle\rho_{h}(\mathbf{r})\rangle\widetilde{\mathbf{u}}_{h}(\mathbf{r})\bigg)&=\int_{\Omega}\rho(\mathbf{r}^{\prime})\mathbf{u}(\mathbf{r}^{\prime})\frac{dw_{h}}{dt}d^{d}\mathbf{r}^{\prime}\nonumber\\
&\qquad+\langle\nabla\cdot\underline{\underline{\tau}},w_{h}\rangle+\langle\rho_{h}(\mathbf{r})\rangle\widetilde{\mathbf{b}}_{h}(\mathbf{r})\nonumber\\
\langle\rho_{h}(\mathbf{r})\rangle\frac{d}{dt}\widetilde{\mathbf{u}}_{h}(\mathbf{r})+\widetilde{\mathbf{u}}_{h}(\mathbf{r})\rangle\frac{d}{dt}\langle\rho_{h}(\mathbf{r})\rangle&=\int_{\Omega}\rho(\mathbf{r}^{\prime})\mathbf{u}(\mathbf{r}^{\prime})\frac{dw_{h}}{dt}d^{d}\mathbf{r}^{\prime}\nonumber\\
&+\langle\nabla\cdot\underline{\underline{\tau}},w_{h}\rangle+\langle\rho_{h}(\mathbf{r})\rangle\widetilde{\mathbf{b}}_{h}(\mathbf{r})\label{deq:2018d9}
\end{align}
By substituting (\ref{deq:2018d4}) and applying the chain rule of calculus to the right hand side, we obtain the general form of the filtered momentum equation.
\begin{align}
\langle\rho_{h}(\mathbf{r})\rangle\frac{d}{dt}\widetilde{\mathbf{u}}_{h}(\mathbf{r})&=-\int_{\Omega}\rho(\mathbf{r}^{\prime})\bigg(\widetilde{\mathbf{u}}_{h}(\mathbf{r})-\mathbf{u}(\mathbf{r}^{\prime})\bigg)\otimes\bigg(\widetilde{\mathbf{u}}_{h}(\mathbf{r})-\mathbf{u}(\mathbf{r}^{\prime})\bigg)\cdot\nabla w_{h}d^{d}\mathbf{r}^{\prime}\nonumber\\
&{}\qquad+\langle\nabla\cdot\underline{\underline{\tau}},w_{h}\rangle+\langle\rho_{h}(\mathbf{r})\rangle\widetilde{\mathbf{b}}_{h}(\mathbf{r})\label{deq:2018d10}
\end{align}
Again, the filtered momentum equation is consistent with explicit LES. 
\begin{mydef}[sub-particle stress tensor, SPS]\label{def:sgs}
	The filtered momentum equation (\ref{deq:2018d10}) introduces momentum transfer due to small scale motion. This is defined by the following 
	\begin{align}
	\langle\underline{\underline{\mathcal{H}}}_{h}(\mathbf{r})\rangle&=\int_{\Omega(\mathbf{r})}\rho(\mathbf{r}^{\prime})(\mathbf{u}(\mathbf{r}^{\prime})-\widetilde{\mathbf{u}}_{h}(\mathbf{r}))\otimes(\mathbf{u}(\mathbf{r}^{\prime})-\widetilde{\mathbf{u}}_{h}(\mathbf{r}))w_{h}d\Omega(\mathbf{r}^{\prime})\nonumber\\&=\langle\rho_{h}(\mathbf{r})\rangle\left(\widetilde{(\mathbf{u}\otimes\mathbf{u})}_{h}(\mathbf{r})-\widetilde{\mathbf{u}}_{h}(\mathbf{r})\otimes\widetilde{\mathbf{u}}_{h}(\mathbf{r})\right)\label{deq:2018d11}\quad\text{by the FIT}
	\end{align}
\end{mydef}
which is a symmetric tensor, a property that is fundamental to the global conservation of energy. We note that the presence of the thermo-kinetic stresses in the momentum equation is purely due to the localization of the flow field by the filtering process. In Large Eddy Simulation, LES it is also called the sub-grid stress (SGS) tensor. it is related to the correlation of the velocity components at scales smaller than the filtering dimension $h$. Since the filtering operation filters out high frequency components of the the flow field $\{\rho, p, \mathbf{u}\}$, the SGS tensors captures the effect of small scale motion on the mean flow $\{\langle\rho_{h}\rangle, \langle p_{h}\rangle, \widetilde{\mathbf{u}}_{h}\}$.       

As a consequence of the FIT, the SGS tensor (\ref{deq:2018d11}) has the following interesting property with respect to the action of the divergence operation on it;  
\begin{align}
\nabla\cdot\langle\underline{\underline{\mathcal{H}}}_{h}(\mathbf{r})\rangle&=\int_{\Omega(\mathbf{r})}\rho(\mathbf{r}^{\prime})(\mathbf{u}(\mathbf{r}^{\prime})-\widetilde{\mathbf{u}}_{h}(\mathbf{r}))\otimes(\mathbf{u}(\mathbf{r}^{\prime})-\widetilde{\mathbf{u}}_{h}(\mathbf{r}))\cdot\nabla w_{h}d\Omega(\mathbf{r}^{\prime})\label{deq:2018d11v}
\end{align} 
\begin{proof}
	\begin{align*}
	&\int_{\Omega}\rho(\mathbf{r}^{\prime})(\mathbf{u}(\mathbf{r}^{\prime})-\widetilde{\mathbf{u}}_{h}(\mathbf{r}))\otimes(\mathbf{u}(\mathbf{r}^{\prime})-\widetilde{\mathbf{u}}_{h}(\mathbf{r}))\cdot\nabla w_{h}d^{d}\mathbf{r}^{\prime}\\&
	=\nabla\cdot\int_{\Omega}\rho(\mathbf{r}^{\prime})\mathbf{u}(\mathbf{r}^{\prime})\otimes\mathbf{u}(\mathbf{r}^{\prime})w_{h}d^{d}\mathbf{r}^{\prime}-\bigg(\nabla\cdot\int_{\Omega}\rho(\mathbf{r}^{\prime})\mathbf{u}(\mathbf{r}^{\prime})w_{h}d^{d}\mathbf{r}^{\prime}\bigg)\otimes\widetilde{\mathbf{u}}_{h}(\mathbf{r})\\&
	\qquad -\widetilde{\mathbf{u}}_{h}(\mathbf{r})\otimes\nabla\cdot\int_{\Omega}\rho(\mathbf{r}^{\prime})\mathbf{u}(\mathbf{r}^{\prime})w_{h}d^{d}\mathbf{r}^{\prime}+\widetilde{\mathbf{u}}_{h}(\mathbf{r})\otimes\widetilde{\mathbf{u}}_{h}(\mathbf{r})\cdot\nabla\int_{\Omega}\rho(\mathbf{r}^{\prime})w_{h}d^{d}\mathbf{r}^{\prime}\\
	&= \nabla\cdot\bigg(\langle\rho_{h}\rangle\widetilde{\left(\mathbf{u}\otimes\mathbf{u}\right)}_{h}\bigg)-\bigg(\nabla\cdot\left(\langle\rho_{h}\rangle\widetilde{\mathbf{u}}_{h}\right)\bigg)\otimes\widetilde{\mathbf{u}}_{h}-\widetilde{\mathbf{u}}_{h}\otimes\nabla\cdot\bigg(\langle\rho_{h}\rangle\widetilde{\mathbf{u}}_{h}\bigg)\nonumber\\
	&\qquad+\widetilde{\mathbf{u}}_{h}\otimes\widetilde{\mathbf{u}}_{h}\cdot\nabla\langle\rho_{h}\rangle\\
	&= \nabla\cdot\bigg[\langle\rho_{h}\rangle\bigg(\widetilde{\left(\mathbf{u}\otimes\mathbf{u}\right)}_{h}-\widetilde{\mathbf{u}}_{h}\otimes\widetilde{\mathbf{u}}_{h}\bigg)\bigg]\\
	&=
	\nabla\cdot\langle\underline{\underline{\mathcal{H}}}_{h}(\mathbf{r})\rangle\quad\text{by (\ref{deq:2018d11}) above}
	\end{align*}
	as claimed.
\end{proof}
The SGS tensors are the components of a second order tensor, which is obviously symmetric. The diagonal components are normal stresses whereas the off-diagonal components are shear stresses. The density weighted turbulent kinetic energy $\widetilde{k}_{h}$ is defined to be half the trace of the SGS tensor and is thus
\begin{align}
\langle\rho_{h}(\mathbf{r})\rangle\widetilde{k}_{h}(\mathbf{r})&:=\frac{1}{2}\tr\bigg(\langle\underline{\underline{\mathcal{H}}}_{h}(\mathbf{r})\rangle\bigg)\nonumber\\
&=\frac{1}{2}\int_{\Omega(\mathbf{r})}\rho(\mathbf{r}^{\prime})\vert\vert\widetilde{\mathbf{u}}_{h}(\mathbf{r})-\mathbf{u}(\mathbf{r}^{\prime})\vert\vert^{2}w_{h}d^{d}\mathbf{r}^{\prime}\label{deq:2018d11a}
\end{align}
It is the locally averaged kinetic energy per unit mass of the fluctuating velocity field.

By direct filtering of the energy conservation law (\ref{eq2016:2aa}), a sub-grid term called turbulent dissipation rate appears in the filtered energy balance equation. It is the rate at which turbulent kinetic energy is converted to thermal internal energy of the system. This is defined as
\begin{align}
\langle\rho_{h}(\mathbf{r})\rangle\widetilde{\varepsilon}_{h}(\mathbf{r})&=\int_{\Omega}\rho(\mathbf{r}^{\prime})\nu\underline{\underline{\hat{\sigma}}}_{h}(\mathbf{r}^{\prime}):\nabla^{\prime}\hat{\mathbf{u}}_{h}(\mathbf{r}^{\prime}) d^{d}\mathbf{r}^{\prime}\label{deq:2018m1}
\end{align}
where $\nu^{\textup{eff}}$ is the effective kinematic viscosity and the small-scale dissipation function is given as
\begin{align}
\nu\underline{\underline{\hat{\sigma}}}_{h}(\mathbf{r}):\nabla\hat{\mathbf{u}}_{h}(\mathbf{r})&=-\int_{\Omega}\nu^{\textup{eff}}\bigg(\frac{\langle\rho_{h}(\mathbf{r}^{\prime})\rangle}{\rho(\mathbf{r})}+\frac{\langle\rho_{h}(\mathbf{r})\rangle}{\rho(\mathbf{r}^{\prime})}\bigg)\vert\vert\widetilde{\mathbf{u}}_{h}(\mathbf{r})-\mathbf{u}(\mathbf{r}^{\prime})\vert\vert^{2}\nonumber\\
&\qquad\times\frac{(\mathbf{r}-\mathbf{r}^{\prime})\cdot\nabla\varphi_{h}}{\vert\vert\mathbf{r}-\mathbf{r}^{\prime}\vert\vert^{2}}d^{d}\mathbf{r}^{\prime}\label{deq:2018m2}
\end{align}

It is important to note that the turbulent dissipation function  satisfies the physical requirement that it be negative definite i.e. $\widetilde{\varepsilon}_{h}\leq0$. 

The integro-differential forms for the pressure (\ref{eq2016:2aa}) can be derived in a similar way. With brevity this derivation will be omitted. The complete filtered differential and integro-differential forms of the CNSEs can now be expressed as
\begin{eqnarray}
\frac{d}{dt}\langle\rho_{h}(\mathbf{r})\rangle&=&-\langle\rho_{h}(\mathbf{r})\rangle\nabla\cdot\widetilde{\mathbf{u}}_{h}(\mathbf{r})\label{eqn:2017c2}\\
\frac{d }{dt}\langle p_{h}(\mathbf{r})\rangle&=&-\langle K_{S}\nabla\cdot\mathbf{u},w_{h}\rangle + \gamma\alpha \langle K_{S}\nabla\cdot(\kappa_{s}\nabla p),w_{h}\rangle\label{eqn:2017c3}\\
\langle\rho_{h}\rangle\frac{d}{dt}\widetilde{\mathbf{u}}_{h}&=&\langle\nabla\cdot \underline{\underline{\tau}},w_{h}\rangle-\nabla\cdot\langle \underline{\underline{\mathcal{H}}}_{h}\rangle+ \langle\rho_{h}\rangle\widetilde{\mathbf{b}}_{h}\label{eqn:2017c4}\\
\frac{d\mathbf{r}}{dt} &=& \widetilde{\mathbf{u}}_{h}(\mathbf{r})\label{eqn:2017c5}
\end{eqnarray}

The filtered equations (\ref{eqn:2017c2}), (\ref{eqn:2017c3}), (\ref{eqn:2017c4}), (\ref{eqn:2017c5}) are not closed because the SGS tensor contains unfiltered velocities. To overcome this closure problem several models including  the eddy viscosity model, Smagorinsky model and Germano dynamic model are widely used in LES. 

By proper choice of the FIT, we obtain an SPH model consistent with explicit LES where the smoothed differential operators in (\ref{eqn:2017c2}), (\ref{eqn:2017c3}), (\ref{eqn:2017c4}) can now be expressed in integral form as;
\begin{align}
\langle\rho_{h}(\mathbf{r})\rangle\nabla\cdot\widetilde{\mathbf{u}}_{h}(\mathbf{r})&=-\int_{\Omega}\rho(\mathbf{r}^{\prime})\left(\widetilde{\mathbf{u}}(\mathbf{r})-\mathbf{u}(\mathbf{r}^{\prime})\right)\cdot\nabla w_{h}d^{d}\mathbf{r}^{\prime} \label{eqn:2017h1}\\
\nabla\cdot\langle\underline{\underline{\mathcal{H}}}_{h}(\mathbf{r})\rangle&=\int_{\Omega}\rho(\mathbf{r}^{\prime})(\mathbf{u}(\mathbf{r}^{\prime})-\widetilde{\mathbf{u}}_{h}(\mathbf{r}))\otimes(\mathbf{u}(\mathbf{r}^{\prime})-\widetilde{\mathbf{u}}_{h}(\mathbf{r}))\cdot\nabla w_{h}d^{d}\mathbf{r}^{\prime}\label{eqn:2017h2}\\
\langle\nabla\cdot \underline{\underline{\tau}},w_{h}\rangle&= \int_{\Omega}\underline{\underline{\tau}}(\mathbf{r}^{\prime})\cdot\nabla w_{h}d^{d}\mathbf{r}^{\prime}+\text{Surface terms or S.T.}\label{eqn:2017h3}\\
\nabla\cdot(\kappa_{s}\nabla \langle p_{h}\rangle)&=-2\int_{\Omega}(\kappa_{s}(\mathbf{r})+\kappa_{s}(\mathbf{r}^{\prime}))[\langle p_{h}(\mathbf{r})\rangle-p(\mathbf{r}^{\prime})]\frac{(\mathbf{r}-\mathbf{r}^{\prime})\cdot\nabla w_{h}}{\vert\vert\mathbf{r}-\mathbf{r}^{\prime}\vert\vert^{2}}d^{d}\mathbf{r}^{\prime}\label{eqn:2017h4}\nonumber\\
&\qquad + \text{S.T.}
\end{align}

Using these integral operators, one can then invoke particle discretization using the SPH quadrature: integrals are replaced by summations. Unfortunately, applying the FIT to the CNSEs introduces additional variables: the number of variables is now doubled and therefore the system is no longer closed. The most natural approach for closure is via approximate deconvolution methods (ADM) to be presented elsewhere.

Since this model does not conserve energy, various techniques can however be used to replace the space operators in (\ref{eqn:2017h3}) by momentum conserving operators.

\section{Concept of Fluctuations and Filtering Regularization}
Consider the fluid domain $\Omega$ to be unbounded or that $supp(w^{h})\cap\partial\Omega=\emptyset$, then the FIT of proposition \ref{prop:conv} can be expressed as follows.  
\begin{myprop}[FIT on unbounded domains]\label{prop:fconv}
	Let $\Omega_{h}(\mathbf{r})$ be a locally compact space within the fluid domain $\Omega$. Then the filtered mass density, momentum density and pressure are given by the FIT; for each $w_{h}\in C^{\infty}_{0}(\Omega_{h})$
	\begin{align}
	\langle\rho_{h}(\mathbf{r})\rangle &=\rho(\mathbf{r})-\int_{\Omega}\bigg(\rho(\mathbf{r})-\rho(\mathbf{r}^{\prime})\bigg)w_{h}(\mathbf{r}-\mathbf{r}^{\prime})d^{d}\mathbf{r}^{\prime}\label{ddeq:1}\equiv \rho(\mathbf{r})-\hat{\rho}_{h}(\mathbf{r})\\
	\widetilde{\mathbf{u}}_{h}(\mathbf{r}) &=\mathbf{u}(\mathbf{r})-\frac{1}{\langle\rho_{h}(\mathbf{r})\rangle}\int_{\Omega}\rho(\mathbf{r}^{\prime})\bigg(\mathbf{u}(\mathbf{r})-\mathbf{u}(\mathbf{r}^{\prime})\bigg)w_{h}(\mathbf{r}-\mathbf{r}^{\prime})d^{d}\mathbf{r}^{\prime}\label{ddeq:1p}\equiv \mathbf{u}(\mathbf{r})-\hat{\mathbf{u}}_{h}(\mathbf{r})\\
	\langle p_{h}(\mathbf{r})\rangle &=p(\mathbf{r})-\int_{\Omega}\bigg(p(\mathbf{r})-p(\mathbf{r}^{\prime})\bigg)w_{h}(\mathbf{r}-\mathbf{r}^{\prime})d^{d}\mathbf{r}^{\prime}\equiv p(\mathbf{r})-\hat{p}_{h}(\mathbf{r})\label{ddeq:p1}
	\end{align} 
\end{myprop}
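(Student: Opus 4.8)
The plan is to obtain each of the three identities directly from the corresponding integral in Proposition~\ref{prop:conv} by the elementary device of adding and subtracting the value of the integrand at the target point $\mathbf{r}$, and then invoking the completeness (normalization) property of the kernel. The only role of the hypothesis $supp(w_h)\cap\partial\Omega=\emptyset$ (equivalently, $\Omega$ unbounded) is to guarantee that the kernel integrates to \emph{exactly} unity, $\int_{\Omega}w_h(\mathbf{r}-\mathbf{r}^{\prime})\,d^{d}\mathbf{r}^{\prime}=1$; if the support met $\partial\Omega$ this integral would instead be some value less than one and the leading term below would not collapse cleanly.

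First, for the density I would start from $\langle\rho_h(\mathbf{r})\rangle=\int_{\Omega}\rho(\mathbf{r}^{\prime})w_h(\mathbf{r}-\mathbf{r}^{\prime})\,d^{d}\mathbf{r}^{\prime}$, write $\rho(\mathbf{r}^{\prime})=\rho(\mathbf{r})-\big(\rho(\mathbf{r})-\rho(\mathbf{r}^{\prime})\big)$, split the integral, and pull the constant $\rho(\mathbf{r})$ out of the first term; completeness reduces that term to $\rho(\mathbf{r})$, leaving exactly (\ref{ddeq:1}), with $\hat{\rho}_h(\mathbf{r})$ just a name for the residual integral. The pressure identity (\ref{ddeq:p1}) follows verbatim with $p$ replacing $\rho$.

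For the velocity I would begin from the momentum-density form (\ref{deq:1p}), namely $\langle\rho_h(\mathbf{r})\rangle\widetilde{\mathbf{u}}_h(\mathbf{r})=\int_{\Omega}\rho(\mathbf{r}^{\prime})\mathbf{u}(\mathbf{r}^{\prime})w_h\,d^{d}\mathbf{r}^{\prime}$, and perform the same add-and-subtract but on the velocity factor, $\mathbf{u}(\mathbf{r}^{\prime})=\mathbf{u}(\mathbf{r})-\big(\mathbf{u}(\mathbf{r})-\mathbf{u}(\mathbf{r}^{\prime})\big)$. The first resulting term is $\mathbf{u}(\mathbf{r})\int_{\Omega}\rho(\mathbf{r}^{\prime})w_h\,d^{d}\mathbf{r}^{\prime}=\mathbf{u}(\mathbf{r})\langle\rho_h(\mathbf{r})\rangle$ by (\ref{deq:1}); dividing through by $\langle\rho_h(\mathbf{r})\rangle$—legitimate since $\rho>0$ on the fluid domain forces $\langle\rho_h(\mathbf{r})\rangle>0$—yields (\ref{ddeq:1p}).

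There is no real obstacle here; the content is pure bookkeeping. The two points deserving a line of care are the (trivial) interchange of the $\mathbf{r}$-constant factors $\rho(\mathbf{r})$, $p(\mathbf{r})$, $\mathbf{u}(\mathbf{r})$ with the $\mathbf{r}^{\prime}$-integration, and the explicit appeal to completeness that annihilates the leading term, which is precisely where the support hypothesis is used. The abbreviations $\hat{\rho}_h,\hat{\mathbf{u}}_h,\hat{p}_h$ on the right are then simply the fluctuation integrals, which is what the subsequent deconvolution discussion will exploit.
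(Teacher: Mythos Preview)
Your argument is correct and matches the paper's intent: the paper does not spell out a proof but presents the proposition as a direct reformulation of Proposition~\ref{prop:conv} under the hypothesis that the support of $w_h$ does not meet $\partial\Omega$, which is exactly the add--and--subtract plus completeness computation you describe. There is nothing to add.
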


Therefore, the FIT is an integral transform that filters out the local fluctuation field \{$\hat{\rho}_{h}, \hat{\mathbf{u}}_{h},\hat{p}_{h}$\} from the underlying disordered field $\{\rho, \mathbf{u}, p\}$ to generate a smooth field $\{\langle{\rho}_{h}\rangle, \widetilde{\mathbf{u}}_{h}, \langle{p}_{h}\rangle$\} field. The smoothing process prevents the production of small scale flow structures due to the fluctuations $\{\hat{\rho}_{h}, \hat{\mathbf{u}}_{h}, \hat{p}_{h}$\}. Under the assumptions that the fluid particles are uniformly distributed locally, the fluctuations may be taken to be small. However, one has to be cautious as this is a very crude approximation. This special case is called the SPH golden rule. To prove this claim, we first prepare the following theorem.

\begin{mydef}[Cauchy-Schwartz inequality]
	Let $\mathbf{u}$, $\mathbf{v}$ be vectors in a vector space $\Omega$ with an inner product. The Cauchy-Schwartz inequality states that
	\begin{align}
	\vert\mathbf{u}\cdot\mathbf{v}\vert\vert\leq\vert\vert\mathbf{u}\vert\vert\vert\vert\mathbf{v}\vert\vert
	\end{align}
	\begin{proof}
		Assuming that $\mathbf{v}\neq 0$, let $\lambda\in\mathbb{C}$ be given by $\lambda:=\mathbf{u}\cdot\mathbf{v}/\vert\vert\mathbf{v}\vert\vert^{2}$, then
		\begin{align}
		0&\leq\vert\vert\mathbf{u}+\lambda\mathbf{v}\vert\vert^{2}\nonumber\\
		&=\vert\vert\mathbf{u}\vert\vert^{2}-\lambda\overline{\mathbf{u}\cdot\mathbf{v}}-\overline{\lambda}(\mathbf{u}\cdot\mathbf{v})+\overline{\lambda}{\lambda}\vert\vert\mathbf{v}\vert\vert^{2}\nonumber\\
		&=\vert\vert\mathbf{u}\vert\vert^{2}-\frac{\vert\mathbf{u}\cdot\mathbf{v}\vert^{2}}{\vert\vert\mathbf{v}\vert\vert^{2}}\nonumber
		\end{align}
		Therefore, $\vert\mathbf{u}\cdot\mathbf{v}\vert\leq\vert\vert\mathbf{u}\vert\vert\vert\vert\mathbf{v}\vert\vert$.  
	\end{proof} 
\end{mydef}
\begin{mydef}[Lipschitz continuity]
	Let $\Omega$ be the fluid domain and $\Omega_{h}(\mathbf{r})=\{\mathbf{r},\mathbf{r}^{\prime}\in\Omega\subset\mathbb{R}^{n}:\vert\vert\mathbf{r}-\mathbf{r}^{\prime}\vert\vert\leq h,~w_{h}\geq0\}$ the test space centered about $\mathbf{r}$. A mapping $\Psi :\mathbb{R}^{d}\to\mathbb{R}^{n}$ is Lipschitz continuous of $\Omega_{h}(\mathbf{r})$ if $^{\exists}M\ge 0$ such that $^{\forall}\mathbf{r},\mathbf{r}^{\prime}\in\Omega_{h}(\mathbf{r})\subset\Omega$
	\begin{align}
	\vert \rho(\mathbf{r})-\rho(\mathbf{r}^{\prime})\vert\le M\vert\vert\mathbf{r}-\mathbf{r}^{\prime}\vert\vert
	\end{align}
	\begin{proof}
		Define $\varphi(s):[0,1]\mapsto\Omega_{h}(\mathbf{r})$ in the following way:
		\begin{align*}
		\varphi(s):=\mathbf{r}^{\prime}+(\mathbf{r}-\mathbf{r}^{\prime})s
		\end{align*}
		This function is differentiable on $(0,1)$ and continuous on $[0,1]$ and so is the composite function $\rho\circ\varphi$. According to the mean value theorem for a single variable function, there exists a $\xi\in [0,1]$ such that 
		\begin{align*}
		(\rho\circ\varphi)^{\prime}(\xi) = (\rho\circ\varphi)(1)-(\rho\circ\varphi)(0) 
		\end{align*}
		By the chain rule of composite functions, it then follows that
		\begin{align*}
		\rho(\mathbf{r})-\rho(\mathbf{r}^{\prime})&=\nabla\rho(\varphi({s}))\cdot(\mathbf{r}-\mathbf{r}^{\prime})\\
		\vert\rho(\mathbf{r})-\rho(\mathbf{r}^{\prime})\vert&\leq\vert\vert\nabla\rho(\varphi({s}))\vert\vert\vert\vert\mathbf{r}-\mathbf{r}^{\prime}\vert\vert
		\end{align*}
		Then ${\forall}\mathbf{r},\mathbf{r}^{\prime}\in\Omega_{h}(\mathbf{r})\subset\Omega$, since the gradient of the fluid density is bounded, it follows that $^{\exists}M\ge0$ so that $\vert\vert\nabla\rho(\varphi({s}))\vert\vert\leq M$. We say that $\rho$ is Lipschitz continuous with
		\begin{align*}
		\vert\rho(\mathbf{r})-\rho(\mathbf{r}^{\prime})\vert&\leq\vert\vert\nabla\rho(\varphi({s}))\vert\vert\vert\vert\mathbf{r}-\mathbf{r}^{\prime}\vert\vert\leq M\vert\vert\mathbf{r}-\mathbf{r}^{\prime}\vert\vert
		\end{align*}
	\end{proof}
\end{mydef}

\begin{mytheo}[uniform continuity]\label{theo:ucontinuity}
	If the fluid density $\rho:\Omega_{h}(\mathbf{r})\subset\Omega\to\mathbb{R}$ satisfies the Lipschitz inequality $^{\forall}\mathbf{r},\mathbf{r}^{\prime}\in\Omega_{h}(\mathbf{r})\subset\Omega$, then $\rho$ is uniformly continuous.
\end{mytheo}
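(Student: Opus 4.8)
The plan is to derive uniform continuity directly from the Lipschitz estimate established in the preceding definition/lemma, exploiting the fact that the Lipschitz constant $M$ is \emph{global} over $\Omega$ (it comes from the uniform bound $\|\nabla\rho\|\le M$ on the whole fluid domain), so that the modulus of continuity can be chosen independently of the base point. First I would recall the definition: $\rho$ is uniformly continuous on $\Omega$ if for every $\varepsilon>0$ there exists $\delta>0$, depending only on $\varepsilon$, such that $\|\mathbf{r}-\mathbf{r}^{\prime}\|<\delta$ implies $|\rho(\mathbf{r})-\rho(\mathbf{r}^{\prime})|<\varepsilon$ for all $\mathbf{r},\mathbf{r}^{\prime}\in\Omega$. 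Then I would invoke the Lipschitz inequality $|\rho(\mathbf{r})-\rho(\mathbf{r}^{\prime})|\le M\|\mathbf{r}-\mathbf{r}^{\prime}\|$ proved above.

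The core of the argument is then a single line: given $\varepsilon>0$, set $\delta:=\varepsilon/(M+1)$ (using $M+1$ rather than $M$ sidesteps the degenerate case $M=0$, i.e.\ $\rho$ constant, without a separate case split). Whenever $\|\mathbf{r}-\mathbf{r}^{\prime}\|<\delta$ we get
\begin{align*}
|\rho(\mathbf{r})-\rho(\mathbf{r}^{\prime})|\le M\|\mathbf{r}-\mathbf{r}^{\prime}\|< M\delta = \frac{M}{M+1}\,\varepsilon \le \varepsilon.
\end{align*}
Since $\delta$ was chosen using only $\varepsilon$ and the fixed constant $M$ — and in particular not depending on $\mathbf{r}$ — this establishes uniform continuity on $\Omega$. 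I would note explicitly that the restriction to a test space $\Omega_h(\mathbf{r})$ in the statement is harmless: the bound on $\|\nabla\rho\|$ holds throughout $\Omega$, so the same $M$ works for every test space simultaneously, which is precisely what upgrades pointwise Lipschitz behavior to genuine uniform continuity.

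I do not anticipate a genuine obstacle here — this is a textbook implication (Lipschitz $\Rightarrow$ uniformly continuous) and the real work was already done in proving the Lipschitz estimate from the mean value theorem and the boundedness of $\nabla\rho$. The only point requiring a modicum of care is the logical quantifier structure: one must make sure the claim of uniformity is honest, i.e.\ that $M$ is the \emph{same} constant for all pairs of points, which hinges on the hypothesis that $\nabla\rho$ is bounded on all of $\Omega$ (a standing smoothness assumption on the continuum fields) rather than merely locally. If one wanted to be scrupulous about the domain, one could also remark that $\Omega$ need not be convex for the argument to run, provided the segment $[\mathbf{r}^{\prime},\mathbf{r}]$ used in the mean value step lies in $\Omega$ whenever $\|\mathbf{r}-\mathbf{r}^{\prime}\|$ is small — which is automatic once $\delta$ is taken smaller than the distance to $\partial\Omega$, shrinking $\delta$ further if necessary. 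Beyond that caveat, the proof is immediate.
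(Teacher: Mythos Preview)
Your proof is correct and follows essentially the same route as the paper: both simply choose the modulus of continuity by dividing $\varepsilon$ by the Lipschitz constant (the paper takes $h=\varepsilon/M$, you take $\delta=\varepsilon/(M+1)$ to avoid the $M=0$ edge case) and read off the conclusion directly from the Lipschitz inequality. Your version is slightly more careful about quantifiers and the degenerate constant case, but the underlying argument is identical.
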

\begin{proof}
	$^{\exists}\varepsilon > 0$ so that for $ h = \varepsilon/M$, $^{\forall}\mathbf{r},\mathbf{r}^{\prime}\in\Omega_{h}(\mathbf{r})\subset\Omega$, $\vert\vert\mathbf{r}-\mathbf{r}^{\prime}\vert\vert<h$ implies that
	\begin{align*}
	\vert\rho(\mathbf{r})-\rho(\mathbf{r}^{\prime})\vert&\leq M\vert\vert\mathbf{r}-\mathbf{r}^{\prime}\vert\vert=Mh=\varepsilon
	\end{align*}
\end{proof}

\begin{mytheo}[Filtering regularization]\label{theo:fregular}
	The filtering of the fluid density $\rho:\Omega_{h}(\mathbf{r})\subset\Omega\to\mathbb{R}$ by the convolution filter $w_{h}\in\textup{C}_{c}^{\infty}(\Omega)$ yields the local density approximation $\langle\rho_{h}\rangle:\Omega_{h}(\mathbf{r})\subset\Omega\to\mathbb{R}$ as given by the FIT. Then for all $\mathbf{r}\in\Omega_{h}(\mathbf{r})$, in the continuum limit
	\begin{align}
	\lim_{h\to 0}\langle\rho_{h}(\mathbf{r})\rangle &= \rho(\mathbf{r})\Longleftrightarrow \lim_{h\to 0}\hat{\rho}_{h}(\mathbf{r})\rangle = 0
	\end{align}
	uniformly.
	\begin{proof}
		Take $h$ so that $\Omega_{h}(\mathbf{r})$:=supp$(w_{h})=\{\mathbf{r},\mathbf{r}^{\prime}\in\mathbb{R}^{n}:\vert\vert\mathbf{r}-\mathbf{r}^{\prime}\vert\vert\le h,~w_{h}\geq0\}$. We have
		\begin{align*}
		\hat{\rho}_{h}(\mathbf{r})&\stackrel{\textup{def}}{=}\int_{\Omega_{h}(\mathbf{r})} \bigg(\rho(\mathbf{r})-\rho(\mathbf{r}^{\prime})\bigg)w_{h}(\mathbf{r}-\mathbf{r}^{\prime})d^{d}\mathbf{r}^{\prime}\\
		\vert\hat{\rho}_{h}(\mathbf{r})\vert&=\bigg\vert\int_{\Omega_{h}(\mathbf{r})}\rho(\mathbf{r})-\rho(\mathbf{r}^{\prime}) w_{h}(\mathbf{r}-\mathbf{r}^{\prime})d^{d}\mathbf{r}^{\prime}\bigg\vert\\
		&\le\int_{\Omega_{h}(\mathbf{r})}\vert\rho(\mathbf{r})-\rho(\mathbf{r}^{\prime})\vert w_{h}(\mathbf{r}-\mathbf{r}^{\prime})d^{d}\mathbf{r}^{\prime}\\
		&\le\text{uniform continuity} \\
		&\le \varepsilon\int_{\Omega_{h}(\mathbf{r})}w_{h}(\mathbf{r}-\mathbf{r}^{\prime})d^{d}\mathbf{r}^{\prime}\\
		\therefore\vert\hat{\rho}_{h}(\mathbf{r})\vert&\leq\varepsilon
		\end{align*}
	\end{proof} 
\end{mytheo}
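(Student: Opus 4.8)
The plan is to split the statement into its trivial algebraic half and its analytic half. The equivalence $\lim_{h\to0}\langle\rho_{h}(\mathbf{r})\rangle=\rho(\mathbf{r})\Longleftrightarrow\lim_{h\to0}\hat{\rho}_{h}(\mathbf{r})=0$ is immediate from the decomposition $\langle\rho_{h}(\mathbf{r})\rangle=\rho(\mathbf{r})-\hat{\rho}_{h}(\mathbf{r})$ of Proposition \ref{prop:fconv}: the two limits differ by the quantity $\rho(\mathbf{r})$, which does not depend on $h$, so one limit vanishes exactly when the other is attained, and this already holds pointwise in $\mathbf{r}$. Hence all of the real content lies in showing that the fluctuation field $\hat{\rho}_{h}$ converges to $0$ \emph{uniformly} on $\Omega_{h}(\mathbf{r})$ as $h\to0$; the uniform convergence of $\langle\rho_{h}\rangle$ to $\rho$ then follows by transporting this back through the decomposition.

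For the uniform estimate I would start from the defining integral $\hat{\rho}_{h}(\mathbf{r})=\int_{\Omega_{h}(\mathbf{r})}\big(\rho(\mathbf{r})-\rho(\mathbf{r}')\big)w_{h}(\mathbf{r}-\mathbf{r}')\,d^{d}\mathbf{r}'$, noting that the support shrinks to $\Omega_{h}(\mathbf{r})=\{\mathbf{r}'\,:\,\|\mathbf{r}-\mathbf{r}'\|\le h,\ w_{h}\ge0\}$. Taking absolute values and moving the modulus inside the integral (triangle inequality for integrals) gives $|\hat{\rho}_{h}(\mathbf{r})|\le\int_{\Omega_{h}(\mathbf{r})}|\rho(\mathbf{r})-\rho(\mathbf{r}')|\,w_{h}(\mathbf{r}-\mathbf{r}')\,d^{d}\mathbf{r}'$, where it is essential that $w_{h}\ge0$ (the semi-positive-definiteness condition (i)). Next I invoke the uniform continuity of $\rho$ from Theorem \ref{theo:ucontinuity} — which itself follows from the global Lipschitz bound with constant $M$ on choosing $h=\varepsilon/M$: for any prescribed $\varepsilon>0$ there is $h>0$, \emph{independent of }$\mathbf{r}$, with $\|\mathbf{r}-\mathbf{r}'\|\le h\Rightarrow|\rho(\mathbf{r})-\rho(\mathbf{r}')|\le\varepsilon$. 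Every $\mathbf{r}'$ in the support meets this condition, so I pull $\varepsilon$ out and close with the normalization (ii): $|\hat{\rho}_{h}(\mathbf{r})|\le\varepsilon\int_{\Omega_{h}(\mathbf{r})}w_{h}(\mathbf{r}-\mathbf{r}')\,d^{d}\mathbf{r}'=\varepsilon$. Letting $\varepsilon\downarrow0$ yields $\hat{\rho}_{h}\to0$ with a bound that is the same for every $\mathbf{r}$, hence the convergence is uniform.

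The main obstacle — really the only delicate point — is the \emph{uniformity}. The pointwise estimate above is routine, but it would be circular to deduce uniform convergence from mere continuity of $\rho$; one genuinely needs the modulus of continuity controlled independently of the base point, which is precisely the content of Theorem \ref{theo:ucontinuity} (equivalently, of the global Lipschitz bound on $\nabla\rho$). A secondary caveat is the normalization step $\int_{\Omega_{h}(\mathbf{r})}w_{h}=1$: it is exact only when the kernel support does not meet $\partial\Omega$, which is ensured here either by the compact-support hypothesis $w_{h}\in C^{\infty}_{c}(\Omega)$ keeping $\Omega_{h}(\mathbf{r})$ interior to $\Omega$, or by working on the unbounded domain of Proposition \ref{prop:fconv}; near the boundary one would instead retain the surface terms noted earlier, and the one-sided bound $\int_{\Omega_{h}(\mathbf{r})}w_{h}\le1$ would still suffice to close the argument.
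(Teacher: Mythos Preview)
Your proposal is correct and follows essentially the same approach as the paper: write $\hat{\rho}_{h}$ as the convolution integral of the difference $\rho(\mathbf{r})-\rho(\mathbf{r}')$, apply the triangle inequality using $w_{h}\ge0$, invoke uniform continuity (Theorem~\ref{theo:ucontinuity}) to bound the integrand by $\varepsilon$, and close with the kernel normalization $\int w_{h}=1$. Your additional remarks on the trivial equivalence via the decomposition of Proposition~\ref{prop:fconv} and on the boundary caveat for the normalization step are welcome elaborations but do not change the underlying argument.
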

By the filtering regularization theorem \ref{theo:fregular} above, it follows that if the local distribution of fluid particles is uniformly continuous then the local density fluctuations are arbitrarily small (vanishing in the continuum limit). 

By the filtering regularization theorem \ref{theo:fregular} above, provided that $h$ is small then the following zeroth order deconvolution approximation is satisfactory.  
\begin{align}
\langle\rho_{h}\rangle\approx\rho,\quad \langle p_{h}\rangle\approx p,\quad \widetilde{\mathbf{u}}_{h}\approx\mathbf{u}\label{deq:2018q1}
\end{align}
with acquires order of accuracy $\mathcal{O}(h^2)$ for bell-shaped convolution filters assuming that the local particle distribution on $\Omega_{h}(\mathbf{r})$:=supp$(w_{h})$ is uniformly continuous and $\Omega_{h}$ does not intersect the domain boundaries. Consequently, after particle discretization, the convergence obtained might not remain as favorable. These approximations are very crude \cite{Admal2010,Quinlan2006}. Furthermore, this inaccuracy will lead to difficulties in enforcing essential boundary conditions.

\section{Choice of velocity smoothing}
In recent years, Monaghan \cite{Monaghan2002}, \cite{Monaghan2011} has pioneered the development of a turbulent model for smoothed particle hydrodynamics called SPH$-\epsilon$. Similar to Monaghan's model, Hu et.al. \cite{Hu2015}have recently proposed a model called SPH$-\sigma$. Due to the adopted velocity smoothing approach in both models, they are fundamentally similar to the Lagrangian Averaged Navier-Stokes, LANS-$\alpha$ model. 

The choice of velocity smoothing is very important especially in the study of turbulence as it leads to varying mathematical structure of the filtered equations. Here a comparison of the velocity smoothing of Monaghan's SPH$-\epsilon$ with the FIT proposed in this thesis are compared. 
\begin{align}
\hat{\mathbf{u}}(\mathbf{r}) = \mathbf{u}(\mathbf{r})-\epsilon\int \bigg(\mathbf{u}(\mathbf{r})-\mathbf{u}(\mathbf{r}^{\prime})\bigg)w_{h}(\mathbf{r}-\mathbf{r}^{\prime})d\mathbf{r}^{\prime}\label{eqn:201612b}
\end{align}
The parameter $\epsilon$ is a constant $0\leq\epsilon\leq 1$. It can be shown that this smoothing approach is similar to the one we propose in this paper. Consider the FIT for unbounded domains \ref{prop:fconv}, 
\begin{align}
\widetilde{\mathbf{u}}_{h} = \mathbf{u}(\mathbf{r})-\frac{1}{\langle\rho_{h}(\mathbf{r})\rangle}\int\rho(\mathbf{r}^{\prime}) \bigg(\mathbf{u}(\mathbf{r})-\mathbf{u}(\mathbf{r}^{\prime})\bigg)w_{h}(\mathbf{r}-\mathbf{r}^{\prime})d\mathbf{r}^{\prime}\label{eqn:201612a}
\end{align} 
which leads to the explicit Large Eddy Simulation (LES) model for compressible flows.   

As a consequence of the choice of velocity smoothing, we inspect Monaghan's SPH-$\epsilon$ model in which their choice of velocity smoothing (\ref{eqn:201612b}) as well as derivation from a Lagrangian leads to a scalar turbulence term in the momentum given as
\begin{align}
\rho_{0}\frac{d}{dt}\mathbf{u}(\mathbf{r})=-\frac{\epsilon}{2}\int\rho(\mathbf{r}^{\prime})\vert\mathbf{u}(\mathbf{r})-\mathbf{u}(\mathbf{r}^{\prime})\vert^{2}\nabla w_{h}d\mathbf{r}^{\prime}\label{eqn:201612c}
\end{align} 
which is the acceleration due to turbulence. This term only modifies the pressure and does not contribute to the shear viscosity. On the other hand the thermokinetic stress quantity in our model is a tensorial quantity given by definition (\ref{def:sgs}). In particular, the turbulence acceleration in this case is 
\begin{align}
&\langle\rho_{h}(\mathbf{r})\rangle\frac{d}{dt}\widetilde{\mathbf{u}}_{h}(\mathbf{r})\nonumber\\
& \qquad {} = -\int_{\Omega_{h}(\mathbf{r})}\rho(\mathbf{r}^{\prime})(\mathbf{u}(\mathbf{r})-\widetilde{\mathbf{u}}_{h}(\mathbf{r}^{\prime}))\otimes(\mathbf{u}(\mathbf{r})-\widetilde{\mathbf{u}}_{h}(\mathbf{r}^{\prime}))\cdot\nabla \varphi_{h}d\Omega(\mathbf{r}^{\prime})\label{eqn:201612d}
\end{align}
Therefore the turbulence term in our model takes into account flow directionality as it is a tensor. On the other hand, (\ref{eqn:201612c}) is a scalar (possibly because it was derived from a Lagrangian) and thus cannot take flow directionality into account. Moreover, it only modifies the pressure term in the momentum equation without any contribution to the shear viscosity that is known to dominate the molecular viscosity at high Reynolds number. It can then be argued that Monaghan's model, in its present form, cannot fully account for turbulent phenomena. It is, however, clear that if the off diagonal terms in (\ref{eqn:201612d}) are zero, the kernel is normalized as it should and the fluid is incompressible then the two models are again similar. 

Other forms of velocity smoothing techniques are possible. Unlike the FIT procedure which appears unambiguous, the velocity smoothing can be defined in multiple other ways. It may seem more natural to define the local velocity in the same structural form as the local density and pressure such that
\begin{align}
\langle\mathbf{u}_{h}(\mathbf{r})\rangle &=\int_{\Omega_{h}(\mathbf{r})}\mathbf{u}_{h}(\mathbf{r}^{\prime})w_{h}(\mathbf{r}-\mathbf{r}^{\prime})d^{n}\mathbf{r}^{\prime}\label{eqn:2018a1}
\end{align}   
or a re-normalized form
\begin{align}
\langle\mathbf{u}_{h}(\mathbf{r})\rangle &=\frac{\int_{\Omega_{h}(\mathbf{r})}\mathbf{u}_{h}(\mathbf{r}^{\prime})w_{h}(\mathbf{r}-\mathbf{r}^{\prime})d^{n}\mathbf{r}^{\prime}}{\int_{\Omega_{h}(\mathbf{r})}w_{h}(\mathbf{r}-\mathbf{r}^{\prime})d^{n}\mathbf{r}^{\prime}}\label{eqn:2018a2}
\end{align}
While both definitions (\ref{eqn:2018a1}) and (\ref{eqn:2018a1}) are mathematically valid, they lack physical foundation in principle. Moreover, they both introduce mass transfer due turbulence in the continuity equation leading to a closure problem.

\section{Conclusions}
The FIT has been successfully applied to the CNSEs to derive a set of filtered equations that are consistent with explicit LES. Most importantly, the origin of the SGS tensor has be clearly demonstrated. With a consistent set of filtered equations, closure models used in LES schemes can be adopted for this version of SPH. 

The correct way of moving the target particle under the influence of support particles has also be given. Furthermore, an exact integral formula for the turbulent kinetic energy has been derived from the SGS tensor whereas the associated turbulence dissipation rate has been proposed, based on intuition and deeper understanding of the SPH concepts, without giving any rigorous treatment.  

Finally, the concept of fluctuations has been introduced. A general conclusion is that if the smoothing length $h$ is sufficiently small and that local particle distribution remains uniform throughout the time evolution of the system, then the fluctuations are arbitrarily small and can thus be neglected. In this case the filtered field and the disordered field are approximately the same, hence no close problem regarding the SGS tensor and other aspects of the proposed SPH model. 

We make the following observations regarding the discussion in this paper;
\begin{enumerate}
\item[(1)] The target SPH particle at position $\mathbf{r}$ moves with a field velocity $\widetilde{\mathbf{u}}_{h}(\mathbf{r})$ which is a locally averaged or smoothed velocity of all other fluid particles within the target particle's support $\Omega_{h}(\mathbf{r})$.  
\item[(2)] both the mass and momentum conservation laws and the pressure equation are not closed. Closure models are required to close up the system.  
\item[(3)] With a proper choice of smoothing rules, the governing equations of LES and SPH are equivalent for compressible turbulence. 
\item[(4)] No surface terms show up in the smoothed continuity equation. This is important since SPH formulations with surface terms in the continuity equation have been proposed.
\item[(5)] Due to the FIT, the associated Lagrangian is non trivial. This makes it difficult to derive averaged equations of motion by variational principles.  
\end{enumerate}

\renewcommand{\refname}{\spacedlowsmallcaps{References}} 

\bibliographystyle{unsrt}

\bibliography{articlez} 


\end{document}